\tikzset{ brokenrect/.style={
    append after command={
      \pgfextra{
      \path[draw,#1]
       decorate[decoration={zigzag,segment length=0.3em, amplitude=.7mm}]
       {(\tikzlastnode.north east)--(\tikzlastnode.south east)}      
        -- (\tikzlastnode.south west)|-cycle;
        }}}}
\tikzset{ brokenrect2/.style={
    append after command={
      \pgfextra{
      \path[draw,#1]
       decorate[decoration={zigzag,segment length=0.3em, amplitude=.7mm}]
       {(\tikzlastnode.north west)--(\tikzlastnode.south west)}      
        -- (\tikzlastnode.south east)|-cycle;
        }}}}
\newtheorem{observation}{Observation}
\newtheorem{proposition}{Proposition}
\newtheorem{lemma}{Lemma}
\newtheorem{definition}{Definition}
\newtheorem{theorem}{Theorem}
\newcommand{\Q}{\mathcal{Q}}
\newcommand{\J}{\mathcal{J}}
\newcommand{\Fmax}{F_{\max}}
\begin{document}


\title{An online joint replenishment problem combined with single machine scheduling}



\author{P\'eter Gy\"orgyi\footnote{Institute for Computer Science and Control, E\"{o}tv\"{o}s Lor\'{a}nd Research Network, Kende Str. 13-17., 1111 Budapest, Hungary, e-mail: gyorgyi.peter@sztaki.hu}, ~Tam\'as Kis\footnote{Institute for Computer Science and Control, E\"{o}tv\"{o}s Lor\'{a}nd Research Network, e-mail: kis.tamas@sztaki.hu} ~and T\'imea Tam\'asi\footnote{Institute for Computer Science and Control, E\"{o}tv\"{o}s Lor\'{a}nd Research Network and Department of Operations Research, E\"otv\"os University, Budapest, Hungary, e-mail: tamasi.timea@sztaki.hu}}
%
%
%

\maketitle

\begin{abstract}
This paper considers a combination of the joint replenishment problem with single machine scheduling. 
There is a single resource, which is required by all the jobs, and a job can be started at time point $t$ on the machine if and only the machine does not process another job at $t$, and the resource is replenished between its release date and $t$. Each replenishment has a cost, which is independent of the amount replenished. The objective is to minimize the total replenishment cost plus the maximum flow time of the jobs.

We consider the online variant of the problem, where the jobs are released over time, and once a job is inserted into the schedule, its starting time cannot be changed. We propose a deterministic 2-competitive  online algorithm for the general input. Moreover, we show that for a certain class of inputs (so-called $p$-bounded input), the competitive ratio of the algorithm tends to $\sqrt{2}$ as the number of jobs tends to infinity. We also derive several lower bounds for the best competitive ratio of any deterministic online algorithm under various assumptions.

\end{abstract}
%
%


\section{Introduction}
In this paper, we study a combination of the classical joint replenishment problem (JRP) with machine scheduling, proposed recently by \citet{gyorgyi2021}.
The joint replenishment problem seeks an optimal replenishment policy of one or several items required to fulfill a sequence of demands over time.
When combined with machine scheduling, a demand is fulfilled only after the required item is replenished, and, in addition, processed on a machine for a given amount of time. The machine processes the demands in some order that has to be determined. The cost to be minimized has two main components: one is related to the replenishment of the items, and another to the scheduling of the demands on the machine. An example for the former one is a fixed cost due each time some item is replenished, while a possible scheduling related cost is the maximum flow time, which is the maximum difference between the completion time of a demand on the machine and its arrival time.
The processing of the demands on the machine adds an extra twist to the problem, and it may delay the fulfillment of the demands.  

In the scheduling literature, a machine processes {\em jobs\/}, and we will identify the demands with jobs. Likewise, we will say that a job $j$ has a {\em release date\/} $r_j$, which is the arrival time of the corresponding demand, and a {\em processing time\/} $p_j$, which equals the processing time of the demand on the machine. The time point when  the machine completes some job $j$ is called the  {\em completion time\/} of  job $j$, and is denoted by $C_j$. A {\em schedule\/} specifies a completion time $C_j$ for each job $j$, and we assume that $C_j \geq r_j + p_j$ for each job $j$, and for each pair of jobs $j \neq k$, either $C_j \geq C_k + p_j$ or $C_k \geq C_j + p_k$, that is, the two jobs are processed in non-overlapping time slots of the machine.

In \citet{gyorgyi2021}, a number of variants of the problem are studied which differ in the scheduling objective, and in the additional constraints on the processing times of the jobs, or in the frequency of the arrival of the jobs.
Offline and online algorithms are proposed with various performance guarantees. In the online problems studied the scheduling objective was the total completion time, the total flow time, and the maximum flow time.
However, in the variant with the maximum flow time objective, it was assumed that a job is released at every non-negative integer time point until no more jobs arrive, each job has a  processing time of one time unit, and the only unknown parameter is when the last job arrives. For this online problem, a $\sqrt{2}$-competitive algorithm has been proposed.

In the present work we focus uniquely on the online problem with the maximum flow time scheduling objective.
We propose a deterministic  2-competitive online algorithm for the input with demands arriving at arbitrary distinct integer time points. Furthermore, we analyze the performance of the algorithm on restricted input, where the difference of the arrival time of any two consecutive demands is bounded by some parameter $p$, so-called \emph{$p$-bounded input\/}. In this setting, the competitive ratio of our algorithm tends to $\sqrt{2}$ as the number of demands tends to infinity. We also provide some lower bounds for the best possible competitive ratio for any online algorithm for general as well as \emph{$p$-regular input\/}, where a demand arrives every $p$ time units.

{\it Notation.\/}
Throughout the paper, we will use the well-known $\alpha|\beta|\gamma$ notation introduced by \citet{graham1979} for classifying scheduling problems, where the $\alpha$ field describes the processing environment, the $\beta$ field consists of the additional restrictions and extensions, while the $\gamma$ field provides the objective function.
We consider single machine scheduling problems, which is denoted by 1 in the $\alpha$ field. In the $\beta$ field $r_j$ indicates  that the jobs have  release dates, while $p_j=1$ restricts the processing time each job to be one time unit.  In the $\gamma$ field, $\sum w_j C_j$ indicates the total weighted completion time objective, where the $w_j$ are non-negative job weights, $\Fmax$ the maximum flow time, which equals $\max F_j$ with $F_j = C_j - r_j$, $L_{\max}$ the maximum lateness defined as $\max L_j$ with $L_j = C_j - d_j$, where the $d_j$ are the due dates of the jobs.
All these objectives are to be minimized in the respective machine scheduling problems.
We will extended the $\beta$ field by $jrp$ which indicates that we combine machine scheduling with joint replenishment, and $s=1$ means that there is only one item type required by all the jobs (demands). Moreover, $\textit{distinct\ } r_j$ stipulates that the jobs have distinct release dates, i.e., $r_j \neq r_k$ for each pair of distinct jobs $j$ and $k$.
So, $1|jrp, s=1, \textit{distinct\ } r_j|\Fmax$ is a concise notation for the combined joint replenishment and scheduling problem on a single machine, where there is one item type, the jobs have distinct release dates, and the objective function is the maximum flow time.

{\it Related work\/}. The joint replenishment problem has been studied for more than 50 years, see \citet{khouja2008} for an overview. In the simplest version of JRP, a demand is ready as soon as the required items are replenished.
In other variants, such as JRP-D, 
the demands have deadlines, and the ordering cost is the only objective function.
In this paper we deal with a variant, where the fulfillment of the demands may be delayed, and the objective is to minimize the total cost incurred by late delivery and by the replenishments. This is called JRP-W, and it is strongly NP-hard even in case of linear delay cost functions, which follows from the NP-hardness of another variant examined by \citet{arkin1989} (called JRP-INV), by reversing the time line. Later, \citet{nonner2009} proved the NP-hardness of a more restricted variant, where each item admits only three distinct demands over the time horizon.

\citet{buchbinder13} described a 3-competitive algorithm for the online  problem with linear delay function, and they gave a lower bound of 2.64 for the best competitive ratio of any online algorithm. The latter was strengthened in \citet{bienkowski2014} to $2.754$, and the authors also proposed a 1.791-approximation algorithm for the offline problem.

There are a lot of results for offline and online   single machine scheduling problems for  different optimization criteria such as $\sum C_j$ \citep{lenstra1977,afrati1999,chekuri2001b}, $\sum w_j C_j$ \citep{anderson2004, hoogeveen1996,goemans2002} or $\sum F_j$ \citep{kellerer1999, epstein2001}. The $\Fmax$ objective is a special case of the $L_{\max}$ objective, where the goal is to minimize maximum lateness \citep{lageweg1976}. \citet{jackson1955} showed that the optimal solution for the problem $1||L_{\max}$ can be obtained by scheduling the jobs in non-decreasing order of their due dates, called the EDD rule. On the other hand, \citet{lenstra1977} showed that the problem $1|r_j|L_{\max}$ is strongly NP-hard.

For $1|r_j|L_{\max}$, \citet{hall1992} proposed an $O(n^2 \log n)$ algorithm with a worst-case ratio of $4/3$, and they also described two Polynomial Time Approximation Schemes. The online variant is analyzed by \citet{hoogeveen1996}. They showed that no online algorithm can obtain a competitive ratio better than $(\sqrt{5}+1)/2$. Using the EDD rule whenever the machine becomes idle leads to a 2-competitive algorithm, and by introducing a clever waiting strategy, the competitive ratio reaches the lower bound of $(\sqrt{5}+1)/2$.

\citet{gyorgyi2021} analyzed the combination of the joint replenishment problem (JRP-W) and the single machine scheduling with different scheduling objective such as $\sum C_j, \sum w_jC_j, \sum F_j$ and $\Fmax$. For the latter, the authors showed that if there are two resources, the problem is NP-hard even under very strong assumptions. For $1|jrp,s=1,r_j|\Fmax$ and $1|jrp,s=const,p_j=p,r_j|\Fmax$, polynomial algorithms based on  dynamic programming were proposed. The paper also considered some online variants of the problem with the $\sum w_j C_j$, $\sum F_j$, and $\Fmax$ objectives. For the former two objectives, deterministic 2-competitive online algorithms were proposed, while for the $\Fmax$ objective, only a special case was considered where a job arrives every time unit till an unknown time moment, so-called \emph{regular input\/}.
For the latter online problem, a deterministic  $\sqrt{2}$-competitive  algorithm was described, and it was shown that there is no deterministic $(4/3-\varepsilon)$-competitive algorithm for any $\varepsilon>0$. For the general input, it was shown that no deterministic  online algorithm can achieve a competitive ratio better than $(\sqrt{5}+1)/2$.

{\it Organization of the paper\/}. We provide the problem formulation and an overview of our results in Section~\ref{sec:prob_form}. In Section~\ref{sec:offline}, we present some properties of the offline optimum for later use. In Section~\ref{sec:online_general}, we propose a deterministic 2-competitive online algorithm for the general input, and also prove that for $p$-bounded input, the  competitive ratio of the same algorithm tends to $\sqrt{2}$. In Section~\ref{sec:num_res} we present numerical results regarding to the algorithm proposed in Section~\ref{sec:online_general}. In Section~\ref{sec:online_LB}, we derive lower bounds for the best competitive ratio for the general and $p$-regular input, respectively. We conclude the paper in Section~\ref{sec:concl}.

\section{Problem formulation and overview of main results}\label{sec:prob_form}
There is a set of $n$ jobs $\J$, one resource, and a single machine. 
Each job $j$ has a processing time $p_j = 1$, and a release date $r_j$. The release dates are distinct i.e. $r_j \neq r_k$ if $j \neq k$. A job can be processed on the machine from time $t$ only if there is a replenishment from the resource in $[r_j,t]$. Each replenishment incurs a cost of $K$.  All data is integral.

A \emph{solution} of the problem is a pair $(S,\Q)$, where $S = \{S_j, j \in \J \}$ is a schedule specifying the starting times of the jobs, and $\Q = \{\tau_i, 1\leq i\leq q\}$ is the set of replenishment times of the resource such that $\tau_i$ is the  $i^{th}$ replenishment time, and $\tau_i < \tau_{i+1}$ for each $i=1,\ldots,q-1$.
The solution is feasible, if the jobs do not overlap in time, i.e. $S_j+1 \leq S_k$ or $S_k+1 \leq S_j$ holds for each $j \neq k$, and for each job $j \in \J$ there exists some $\tau_i \in \Q$ such that $\tau_i \in [r_j,S_j]$. 
The completion of  job $j$ in schedule $S$ is $C_j=S_j+1$, and its flow time is $F_j = C_j - r_j$.
The {\em replenishment cost\/} of a solution is $c_{\Q} := Kq$, while the {\em maximum flow time\/} is $F_{\max}=\max_{j\in\J} F_j$.
The {\em cost of a solution\/} is $cost(S,\Q) = c_{\Q} + F_{\max}$. We seek  a feasible solution of minimum cost.

In the online problem, the jobs arrive over time, and there is no information about them before their release date. The solution is constructed step-by-step, the starting time of a job and the replenishment times, once fixed, cannot be reversed. 
However, upon arrival of the last job, the scheduler is notified immediately that there will be no more jobs.  

An input is called \emph{$p$-regular\/}, if $r_j=(j-1)p$ for $j\geq 1$, for a given integer $p\geq 1$. It is \emph{regular\/} if it is 1-regular.
 We will also consider \emph{$p$-bounded input}, where the only known information about the input is that the difference of two consecutive release dates is upper bounded by some number $p$, i.e., $r_{j+1}-r_j \leq p$ for $j \geq 1$.

We illustrate the problem and its possible  solutions in Example~\ref{ex}.

\paragraph{Example}\label{ex}
Consider an input consisting of two jobs, with release dates $r_1=0$ and $r_2=t$ for some $t \geq 1$. If an algorithm makes two replenishments in $\tau_1 < t$ and $\tau_2 = t$, then the cost of this solution is $cost(S,\Q) = 2K + \tau_1+1$. However, if we postpone the replenishment and the starting time of the first job, then the objective is $cost(S', \Q') = K+t+1$. We have saved $K$ at the replenishment cost, but the maximum flow time has increased by $t-\tau_1$. Depending on whether $K$ or $t-\tau_1$ is bigger, the first or the second solution has a smaller cost. See Figure~\ref{fig:example} for an illustration. Of course, in an online setting we do not know when the second job is released, therefore, we  have to make the decision about the first replenishment time $\tau_1$ before time $t$.

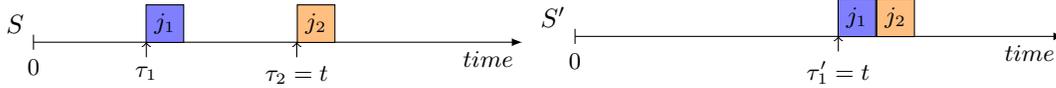
\begin{figure}[h]
\begin{tikzpicture}
\def\ox{0} 
\def\oy{0} 
\def\ui{1.5}
\def\uii{3.50}
\def\uiii{8}
\coordinate(o) at (\ox,\oy); 
\coordinate(u1) at (\ui,\oy);
\coordinate(u2) at (\uii,\oy);
\coordinate(u3) at (\uiii,\oy);

\tikzstyle{mystyle}=[draw, minimum height=0.5cm,rectangle, inner sep=0pt,font=\small]

\def\tl{6.5} 
\def\oyi{0}
\draw [-latex](\ox,\oyi) node[above left]{$S$} -- (\ox+\tl,\oyi) node[below left,font=\small]{$time$};

\draw[-] ($(o)-(0,-0.1)$) -- ($(o)-(0,0.1)$) node[below] {\small 0};
\draw[<-] (u1) -- ($(u1)-(0,0.2)$) node[below] {\small $\tau_1$};
\draw[<-] (u2) -- ($(u2)-(0,0.2)$) node[below] {\small $\tau_2 = t$};

\def\pi{0.7}
\node(b1) [above right=-0.01cm and -0.01cm of u1,mystyle, minimum width=0.5 cm, fill = blue!50]{$j_1$};
\node(b3) [above right=-0.01cm and -0.00cm of u2,mystyle, minimum width=0.5cm, fill = orange!50]{$j_2$};
\end{tikzpicture}
\begin{tikzpicture}
\def\ox{0} 
\def\oy{0} 
\def\ui{1.5}
\def\uii{3.50}
\def\uiii{8}
\coordinate(o) at (\ox,\oy); 
\coordinate(u1) at (\ui,\oy);
\coordinate(u2) at (\uii,\oy);
\coordinate(u3) at (\uiii,\oy);

\tikzstyle{mystyle}=[draw, minimum height=0.5cm,rectangle, inner sep=0pt,font=\small]

\def\tl{6.5} 
\def\oyi{0}
\draw [-latex](\ox,\oyi) node[above left]{$S'$} -- (\ox+\tl,\oyi) node[below left,font=\small]{$time$};

\draw[-] ($(o)-(0,-0.1)$) -- ($(o)-(0,0.1)$) node[below] {\small 0};
\draw[<-] (u2) -- ($(u2)-(0,0.2)$) node[below] {\small $\tau'_1=t$};

\def\pi{0.7}
\node(b1) [above right=-0.01cm and -0.01cm of u2,mystyle, minimum width= 0.5 cm, fill = blue!50]{$j_1$};
\node(b3) [right=0cm of b1,mystyle, minimum width=0.5cm, fill = orange!50]{$j_2$};
\end{tikzpicture}
\caption{Two feasible solutions. The arrows below the axis denote the replenishments.}\label{fig:example}
\end{figure}

In this article, we focus on the online problem $1|jrp,s=1,  p_j=1, \textit{distinct\ } r_j|\Fmax+c_\Q$. First, we present some preliminary results regarding the offline optimal solution. Although \citet{gyorgyi2021} already covered the offline variant of the problem, these results help in the analysis of the proposed online algorithm. We show that if the release dates of the jobs are not necessary distinct, then the problem $1|jrp, s=1, p_j=1, \textit{distinct\ }r_j|\Fmax+c_\Q$ is equivalent to the problem $1|jrp, s=1, r_j|\Fmax+c_\Q$ (i.e., when the jobs can have arbitrary processing times), see Proposition~\ref{prop:distinct}. This justifies our assumption that the jobs have distinct release dates.

We devise a deterministic 2-competitive online algorithm for the problem $1|jrp,s=1,p_j=1,distinct \ r_j|\Fmax+c_\Q$. For the so-called \emph{sparse input}, where the difference between two consecutive release dates $r_j$ and $r_{j+1}$ is lower bounded by $Kj$, this analysis is tight. On the other hand, we show that for the $p$-bounded input, the competitive ratio of the algorithm tends to $\sqrt{2}$ as the number of jobs tends to infinity. This result generalizes the one of \citet{gyorgyi2021} for the regular input, and although it does not reach the $\sqrt{2}$-competitive ratio for short sequences of jobs, in the long run it gets arbitrarily close to it.

Lastly, we provide new lower bounds for the best competitive ratio. For the general input, there is no online algorithm with competitive ratio of $3/2$, even if there are only two jobs in the input. In the case of three jobs, this lower bound is $4/3$. We also provide a lower bound for the best competitive ratio in the case of the $p$-regular input. For long sequences of jobs (i.e., where the last job arrives at some large time point $t > t_0$), there is no algorithm with a competitive ratio better than 1.015.

We mention that our online model  is slightly different from that of \citet{gyorgyi2021}. While in this paper, upon the arrival of the last job we get the information that there will be no further jobs, in \citet{gyorgyi2021} this information is not available at once. Therefore, the presented lower bounds cannot be directly compared with each other. In fact, we receive a smaller lower bound for the general case ($3/2$ instead of $(\sqrt{5}+1)/2$). 

We summarize our new results, along with some previous ones in Table~\ref{tab:results}.

\begin{table}[h]
\caption{Old and new results for the online problem $1|jrp,s=1,p_j=1,\textit{distinct\ } r_j| \Fmax$.}
\label{tab:results}
\begin{center}
\begin{tabular}{ccc}
\hline
Restriction & Result & Source\\

\hline
regular $r_j$	& $\sqrt{2}$-comp. alg.	&\citet{gyorgyi2021} \\
-	&no $(\sqrt{5}+1)/2$-comp.	alg.	&\citet{gyorgyi2021} \\
regular $r_j$		&no $4/3$-comp.	alg.	&\citet{gyorgyi2021} \\
\hline
-		&2-comp.	alg.	&Theorem~\ref{thm:2-comp} \\
$p$-bounded $r_j$	& $\sqrt{2}$-comp.~alg.~for $n\rightarrow \infty$	&Theorem~\ref{thm:sqrt2_comp} \\
 $n=2$		&no $3/2$-comp.	alg.	&Theorem~\ref{thm:neg_res_2jobs} \\
 $n \geq 3$	& no $4/3$-comp. alg.	&Theorem~\ref{thm:neg_res_3jobs} \\
$p$-regular $r_j$ 	&no $1.015$-comp.~alg.~for $n \rightarrow \infty$	 &Theorem~\ref{thm:neg_res} \\
\hline
\end{tabular}
\end{center}
\end{table}

\section{Properties of the offline optimum}\label{sec:offline}

In this section we are going to present some properties of the offline optimal solution regarding the general, $p$-regular and $p$-bounded input. Denote with $OPT(I)$ the offline optimum for input $I$. First we make some easy observations:

\begin{observation}\label{obs:repl_times_increasing}
It is enough to consider solutions for which the jobs are scheduled in increasing order of their release date.
\end{observation}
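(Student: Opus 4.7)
The plan is to establish the statement by a standard exchange argument: starting from an arbitrary feasible solution $(S,\Q)$, I repeatedly swap the starting times of pairs of jobs that are out of release-date order, show that each such swap preserves feasibility and does not increase the cost, and conclude that some optimal solution has the desired property.

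Concretely, suppose $(S,\Q)$ is feasible and there exist jobs $j,k$ with $r_j < r_k$ but $S_j > S_k$. Define $S'$ by setting $S'_j := S_k$, $S'_k := S_j$, and $S'_\ell := S_\ell$ for all other $\ell$. The non-overlap condition is preserved since only two starting times are exchanged. For the replenishment constraint, the original solution contained a replenishment $\tau \in \Q$ with $\tau \in [r_k,S_k]$. Since $r_j < r_k$, this same $\tau$ satisfies $\tau \in [r_j,S_k] = [r_j,S'_j]$, so it serves job $j$ at its new starting time. Likewise, any replenishment originally serving $j$ is irrelevant, because $\tau \in [r_k,S_k] \subseteq [r_k,S_j] = [r_k,S'_k]$ now covers job $k$. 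Thus $(S',\Q)$ is feasible.

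Next I would check that the cost does not increase. The replenishment cost $c_\Q$ is unchanged. The flow times of all jobs other than $j,k$ are unchanged. For the swapped jobs, using $r_j < r_k$ and $S_k < S_j$,
\[
F'_j = S_k + 1 - r_j < S_j + 1 - r_j = F_j, \qquad F'_k = S_j + 1 - r_k < S_j + 1 - r_j = F_j.
\]
Hence $\max(F'_j,F'_k) \leq F_j \leq F_{\max}$, so $F'_{\max} \leq F_{\max}$ and therefore $cost(S',\Q) \leq cost(S,\Q)$.

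Finally, let $\mathrm{inv}(S)$ denote the number of pairs $\{j,k\}$ with $r_j<r_k$ but $S_j>S_k$. Each exchange above strictly decreases $\mathrm{inv}(S)$ by at least one (the swapped pair is resolved, and no new inversion is created because only these two starting times are affected), so after finitely many swaps we obtain a feasible schedule of no greater cost in which jobs are ordered by increasing release date. The only subtlety in the argument is the replenishment feasibility check, which is handled by observing that a replenishment valid for the later-released job is automatically valid for the earlier-released one; everything else is a routine exchange.
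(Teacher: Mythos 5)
The paper states this observation without proof, so there is nothing to compare against directly; your exchange argument is a perfectly reasonable way to justify it, and all the essential checks are present: the swap preserves non-overlap (using that all processing times equal $1$, so exchanging starting times is harmless), the replenishment covering the later-released job $k$ in $[r_k,S_k]$ also covers $j$ after the swap since $[r_k,S_k]\subseteq[r_j,S'_j]$ and covers $k$ at its new later start since $[r_k,S_k]\subseteq[r_k,S'_k]$, and both new flow times are bounded by $F_j$, so the cost does not increase. One small inaccuracy: your parenthetical claim that ``no new inversion is created'' is not literally true. If a third job $\ell$ with $r_j<r_k<r_\ell$ has $S_k<S_\ell<S_j$, then the swap creates a new inversion in the pair $\{k,\ell\}$ (while simultaneously destroying one in $\{j,\ell\}$). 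A short case analysis on the position of $r_\ell$ relative to $r_j$ and $r_k$ shows the contribution of every such third job to $\mathrm{inv}(S)$ never increases, and the pair $\{j,k\}$ itself drops from $1$ to $0$, so the total still strictly decreases and your induction terminates; alternatively, you can sidestep the issue entirely by always swapping a pair that is \emph{adjacent} in the schedule, in which case no third job lies between them. This is a cosmetic repair, not a gap in the substance of the argument.
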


\begin{observation}\label{obs:repl_times_rel_dates}
It suffices to replenish the resource only at the release dates of some jobs \citep{gyorgyi2021}.
\end{observation}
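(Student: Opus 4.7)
The plan is to show that any feasible solution $(S,\Q)$ can be transformed into another feasible solution $(S,\Q')$ of no larger cost in which every replenishment time coincides with the release date of some job. Since the schedule $S$ is unchanged, $F_{\max}$ is preserved, so the only thing to guarantee is $|\Q'|\le |\Q|$; this will follow automatically from the construction.

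I would use a left-shift exchange argument. Scan the replenishment times $\tau_1<\tau_2<\cdots<\tau_q$ of $\Q$ in order, and for each $\tau_i$ that is not already a release date, let $R_i := \{r_j : \tau_{i-1} < r_j \le \tau_i\}$ with the convention $\tau_0:=-\infty$. If $R_i\ne\emptyset$, replace $\tau_i$ by $r^* := \max R_i$, which is a release date satisfying $\tau_{i-1}<r^*\le \tau_i$; if $R_i=\emptyset$, simply delete $\tau_i$ from $\Q$. In either case the sequence of replenishment times remains strictly increasing and $|\Q'|\le |\Q|$, so the replenishment cost does not grow, and since the schedule $S$ is untouched, neither does $F_{\max}$.

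The main step, and essentially the only real obstacle, is verifying that feasibility is preserved at each modification. Consider any job $j$ that was served by $\tau_i$ in the original solution, so $r_j \le \tau_i \le S_j$. Either $r_j \in R_i$, in which case $r_j \le r^* \le \tau_i \le S_j$ shows that $r^*$ is a valid replenishment in $[r_j,S_j]$; or $r_j \le \tau_{i-1}$, in which case $\tau_{i-1}\in[r_j,S_j]$ since $\tau_{i-1}<\tau_i\le S_j$, and $\tau_{i-1}$ (or the release date that has already replaced it in an earlier iteration) lies in $\Q'$ and still serves $j$. In the deletion case $R_i=\emptyset$ forces the second subcase for every job formerly using $\tau_i$, so feasibility is again maintained. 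Iterating this rule until no non-release-date replenishment remains produces the desired $\Q'$; as the observation is attributed to \citet{gyorgyi2021}, this standard exchange argument is all that is needed.
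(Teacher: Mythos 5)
Your proposal is correct in substance. Note that the paper itself gives no proof of this observation at all---it is stated as a fact imported from \citet{gyorgyi2021}---so your left-shift exchange argument is exactly the standard justification the authors are implicitly relying on, and it is the natural one: shift each replenishment back to the latest release date it dominates, delete it if it dominates none, and observe that neither the schedule nor the number of replenishments can get worse. The one place where your write-up is slightly too quick is Case 2 of the feasibility check: you assert that $\tau_{i-1}$ ``or the release date that has already replaced it'' still lies in $\Q'$ and serves $j$, but $\tau_{i-1}$ may itself have been \emph{deleted} (when $R_{i-1}=\emptyset$), in which case you must descend further. The clean fix is to take the unique index $i'$ with $\tau_{i'-1} < r_j \le \tau_{i'}$; then $r_j \in R_{i'}$, so $\tau_{i'}$ is not deleted, its surviving replenishment $\max R_{i'}$ satisfies $r_j \le \max R_{i'} \le \tau_{i'} \le \tau_i \le S_j$, and $j$ remains covered. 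With that one-line patch the argument is complete.
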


\begin{observation}\label{obs:subset}
If $I' \subseteq I$, then $OPT(I') \leq OPT(I)$.
\end{observation}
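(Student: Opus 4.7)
The plan is to exhibit, for any $I' \subseteq I$, a feasible solution for $I'$ whose cost is at most $OPT(I)$, which immediately yields the inequality. I would take an optimal solution $(S^*,\Q^*)$ for $I$ and construct a candidate solution $(S',\Q')$ for $I'$ by leaving the starting times of the surviving jobs unchanged, i.e.\ $S'_j := S^*_j$ for every $j \in I'$, and keeping all replenishments, $\Q' := \Q^*$.

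First I would verify feasibility. Since $(S^*,\Q^*)$ is feasible for the larger set $I$, the non-overlap constraint holds on the subset $I' \subseteq I$ as well, and for each $j \in I'$ the replenishment in $\Q^*$ that witnessed feasibility in $[r_j, S^*_j]$ is still present in $\Q'$.

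Next I would compare the costs. The replenishment cost of $(S',\Q')$ equals $K|\Q^*|$, i.e.\ the replenishment cost of $(S^*,\Q^*)$. The maximum flow time $\max_{j\in I'}(S^*_j+1-r_j)$ is a maximum taken over a smaller set than $\max_{j\in I}(S^*_j+1-r_j)$ and is therefore at most $F_{\max}(S^*)$. Adding the two bounds gives $cost(S',\Q') \leq cost(S^*,\Q^*)=OPT(I)$, and hence $OPT(I') \leq cost(S',\Q') \leq OPT(I)$.

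There is no real obstacle here: the statement is a pure monotonicity observation, and the only thing to verify is that removing jobs cannot increase either cost component, both of which are immediate. If desired, one could additionally prune from $\Q^*$ any replenishment that no longer serves any job in $I'$, yielding a strictly cheaper candidate when applicable, but this refinement is not needed for the stated inequality.
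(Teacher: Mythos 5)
Your proof is correct and matches the argument the paper intends: the observation is stated without proof, but the paper uses exactly this restriction argument (keep the replenishments, keep the starting times of surviving jobs, note that both cost components can only decrease) in the proof of Proposition~\ref{prop:opt_R_I}. Nothing is missing.
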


In this article, we only consider inputs where the release dates are distinct. If this condition does not hold, the problem is equivalent to a more general problem:

\begin{proposition}
If the release dates are not distinct, then the problem $1|jrp, s=1,p_j=1,r_j| \Fmax$ is equivalent to the problem $1|jrp,s=1,r_j| \Fmax$, i.e., when the jobs have arbitrary processing time.\label{prop:distinct}
\end{proposition}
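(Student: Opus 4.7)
The plan is to establish the equivalence by giving cost-preserving reductions in both directions. One direction is immediate: any instance of $1|jrp, s=1, p_j=1, r_j|\Fmax$ is already an instance of $1|jrp, s=1, r_j|\Fmax$ in which every $p_j$ happens to equal $1$, so every feasible solution and its cost carry over verbatim.

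For the nontrivial direction, given an instance $I$ of $1|jrp, s=1, r_j|\Fmax$ with processing times $p_j$ and release dates $r_j$, I would construct an instance $I'$ of $1|jrp, s=1, p_j=1, r_j|\Fmax$ by replacing each job $j$ of $I$ by $p_j$ unit ``copies'', all sharing the release date $r_j$; the replenishment cost $K$ is unchanged. I then establish $OPT(I)=OPT(I')$ via two inequalities. For $OPT(I')\le OPT(I)$, I take a solution $(S,\Q)$ of $I$ and schedule the $p_j$ copies of job $j$ back-to-back in the slot $[S_j,S_j+p_j)$, reusing $\Q$; the copy that starts at $S_j+k$ is feasible because the replenishment that $(S,\Q)$ places in $[r_j,S_j]$ also lies in $[r_j,S_j+k]$, and the last copy completes at $C_j=S_j+p_j$, so both the replenishment cost and $\Fmax$ are preserved. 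For $OPT(I)\le OPT(I')$, I take a solution $(S',\Q')$ of $I'$; by Observation~\ref{obs:repl_times_increasing} I may assume the copies are processed in non-decreasing release-date order, and I may further break ties so that the $p_j$ copies of each original job $j$ occupy a contiguous block $[t_j,t_j+p_j)$. Setting $S_j:=t_j$ yields a schedule of $I$ with the same replenishments $\Q'$, the same replenishment cost, and $F_j=t_j+p_j-r_j$ equal to the flow time of $j$'s last copy in $(S',\Q')$, so $\Fmax$ does not increase.

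The only step requiring care is the tie-breaking in the second inequality: I have to argue that the copies sharing a release date can always be reordered into a contiguous block without violating feasibility. The point is that such copies are freely interchangeable in any feasible schedule, since every replenishment that validates one copy validates all of them (they have the same release date), and swapping two jobs with identical release dates leaves every flow time unchanged. Hence the grouping can be done by a finite sequence of adjacent swaps starting from the schedule already ordered by release date, and the remainder of the argument is bookkeeping that matches replenishment times, starting times, and completion times across the two instances.
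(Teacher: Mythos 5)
Your proposal is correct and follows essentially the same route as the paper: the key step in both is to split a job of processing time $p_j$ into $p_j$ unit copies sharing the release date $r_j$ (and conversely to identify coinciding-release-date unit jobs with one longer job), with feasibility and cost carried over by the same replenishments. You actually supply more detail than the paper does for the direction $OPT(I)\le OPT(I')$ — the paper simply asserts that $(S',\Q')$ remains feasible — and your tie-breaking/adjacent-swap argument is a sound way to justify that assertion.
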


\begin{proof}
Consider an optimal solution $(S^\star, \Q^\star)$ for the input $I$ of the problem $1|jrp,s=1,p_j=1,r_j| \Fmax$. Let $I_t$ be the set of jobs released at time $t$ in $I$. By Observation~\ref{obs:repl_times_increasing}, the jobs are ordered in non-decreasing order of their release dates in $(S^\star, \Q^\star)$, hence, jobs in $I_t$ are scheduled consecutively after each other. Let $I'$ be the following input for the problem $1|jrp,s=1, r_j| \Fmax$: for each $t$, for which $I_t \neq \emptyset$, define a job $j$ such that $r_j = t$ and $p_j = |I_t|$. Observe that the solution $(S^\star, \Q^\star)$ is feasible for $I'$.

For the other direction, consider an input $I'$ for the problem $1|jrp,s=1,r_j| \Fmax$, and let $(S', \Q')$ be an optimal solution. Now let the input $I$ for the problem $1|jrp,s=1,p_j=1,r_j| \Fmax$ be the following: for each $j \in I'$, define $p_j$ unit-length jobs with release date $r_j$. Similarly, $(S', \Q')$ is a feasible solution for $I$. Therefore, the two problems are equivalent, hence the statement is proved.
\end{proof}

From now on, we only consider the problem $1|jrp,s=1,p_j=1,distinct \ r_j| \Fmax$.

\begin{observation} \label{obs:block_flowtime}
For any feasible schedule, consider any pair of two jobs, $j$ and $k$ (for which $r_j<r_k$), scheduled consecutively, i.e., $S_j+1 = S_k$, on the machine.
Then $F_j \geq F_k$.
\end{observation}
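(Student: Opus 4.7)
The plan is a direct calculation from the definitions, using only that release dates are distinct integers. First I would write down the completion and flow times explicitly. Because $j$ and $k$ are consecutive with $S_j+1=S_k$, we have $C_j=S_j+1$ and $C_k=S_k+1=S_j+2$. Thus
\begin{equation*}
F_j - F_k = (C_j - r_j) - (C_k - r_k) = (S_j+1-r_j) - (S_j+2-r_k) = r_k - r_j - 1.
\end{equation*}

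Next I would invoke the standing assumption that all data are integral and release dates are distinct, so from $r_j<r_k$ it follows that $r_k-r_j\geq 1$, hence $F_j-F_k\geq 0$, which is the claim.

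There is essentially no obstacle here: the observation is immediate once one writes out the definitions and uses integrality of the release dates together with $r_j<r_k$. The only thing worth emphasising in the write-up is where the strict inequality $r_j<r_k$ (combined with integrality) is used, since without integrality one would only obtain $F_j>F_k-1$ rather than $F_j\geq F_k$.
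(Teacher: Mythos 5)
Your proof is correct and is exactly the one-line computation the paper leaves implicit (the observation is stated without proof there): with unit processing times, $F_j-F_k=r_k-r_j-1\geq 0$ by integrality and distinctness of the release dates. You are also right to flag that integrality is the point where the argument would otherwise only give $F_j>F_k-1$.
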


From this observation, we can conclude the following:

\begin{observation} \label{obs:fmax_idle}
For any feasible solution, the maximum flow time is given by the first job scheduled in some $\tau_i$, such that the machine is idle before $\tau_i$.
\end{observation}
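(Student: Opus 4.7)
My plan is to decompose any feasible schedule into maximal blocks of consecutively processed jobs, and apply Observation~\ref{obs:block_flowtime} within each block. By Observation~\ref{obs:repl_times_increasing} I may assume the jobs are sorted by increasing release date. I call a maximal set of jobs $j_{i_1}, j_{i_2},\ldots, j_{i_m}$ scheduled so that $S_{j_{i_{\ell+1}}}=S_{j_{i_\ell}}+1$ for every $\ell$ a \emph{block}. The schedule then partitions uniquely into blocks $B_1,\ldots,B_K$ separated by idle intervals of the machine.

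Next, I would apply Observation~\ref{obs:block_flowtime} inductively to every pair of neighbouring jobs inside a fixed block $B_\ell$ to conclude that the flow times are non-increasing from the first to the last job of $B_\ell$. Hence the maximum flow time inside $B_\ell$ is realized by its earliest scheduled job, call it $j^{(\ell)}$. Consequently, $F_{\max}=\max_{1\leq \ell\leq K} F_{j^{(\ell)}}$, which is achieved by the first job of some block $B_{\ell^\star}$.

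Finally, I would identify the relevant replenishment. Since $j^{(\ell^\star)}$ is feasible, there exists $\tau_i\in\Q$ with $\tau_i\in[r_{j^{(\ell^\star)}},S_{j^{(\ell^\star)}}]$; choose the largest such index $i$. By definition of a block, the machine is idle throughout the interval that immediately precedes $S_{j^{(\ell^\star)}}$ (it contains the entire gap between $B_{\ell^\star-1}$ and $B_{\ell^\star}$, or the whole interval before time $S_{j^{(\ell^\star)}}$ if $\ell^\star=1$). In particular, the machine is idle just before $\tau_i$, and $j^{(\ell^\star)}$ is the first job scheduled at or after $\tau_i$, as required.

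The argument is essentially routine; the only small point to be careful about is linking the block structure with an actual replenishment time $\tau_i$ (as opposed to the start time $S_{j^{(\ell^\star)}}$), but this is immediate from feasibility and Observation~\ref{obs:repl_times_rel_dates}. There is no substantial obstacle.
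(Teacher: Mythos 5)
Your argument is correct and is exactly the reasoning the paper intends: the paper states this observation without proof, as an immediate consequence of Observation~\ref{obs:block_flowtime}, and your decomposition into maximal blocks with non-increasing flow times inside each block is the standard way to make that implication explicit. The only loose point --- identifying the start of the maximizing block with an actual replenishment time $\tau_i$ at which the machine is idle, which can fail for a completely arbitrary feasible schedule --- is an imprecision inherited from the statement itself, and it disappears under the paper's standing convention (introduced immediately afterwards) that each job is started as soon as possible after the first replenishment following its release date.
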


Following Observations~\ref{obs:repl_times_increasing} and \ref{obs:repl_times_rel_dates}, we only consider offline solutions where the replenishments occur at the job release dates, and the jobs are scheduled in increasing release date order as soon as possible (i.e., after each of the earlier jobs are scheduled and after the first replenishment following their release date).

Next, we derive some properties of the $p$-regular input consisting of $n$ jobs, which we denote by $R_n$.

\begin{observation}\label{obs:flowtime_from_replenishment}  Consider the $p$-regular input $R_n$.
\begin{enumerate}
\item [i)]If there is a replenishment which provides resource for at least $n'$ jobs in a feasible solution, then the maximum flow time is at least $(n'-1)p+1$.
\item [ii)]For any feasible solution with $q$ replenishments, there exists a job which has a flow time of at least $(\lceil n/q\rceil-1)p+1$.
\item [iii)]The cost of any feasible solution with $q$ replenishments is at least  $qK+(\lceil n/q\rceil-1)p+1$.

\end{enumerate}
\end{observation}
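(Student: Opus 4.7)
The three parts build on one another, so I would handle them in the order stated.

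For part (i), the plan is to isolate the earliest-released job $j_1$ among the $n'$ jobs served by the replenishment at $\tau_i$, and to bound its flow time from below. By the feasibility condition, each of the $n'$ jobs $j$ served by $\tau_i$ satisfies $\tau_i \in [r_j, S_j]$, so all their release dates are $\leq \tau_i$. Because the input is $p$-regular, these release dates are precisely $r_{j_1}, r_{j_1}+p, \ldots, r_{j_1}+(n'-1)p$, and the last of them being $\leq \tau_i$ yields $r_{j_1} \leq \tau_i - (n'-1)p$. On the other hand, $S_{j_1} \geq \tau_i$, hence $C_{j_1} \geq \tau_i + 1$, and combining the two gives
\[
F_{j_1} \;=\; C_{j_1}-r_{j_1} \;\geq\; \tau_i + 1 - r_{j_1} \;\geq\; (n'-1)p + 1,
\]
which is a lower bound on $F_{\max}$.

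For part (ii), I would use a pigeonhole argument on the assignment of jobs to replenishments: if $q$ replenishments together serve all $n$ jobs, then at least one of them serves $\lceil n/q\rceil$ of them. Applying part (i) with $n'=\lceil n/q\rceil$ produces a job whose flow time is at least $(\lceil n/q\rceil-1)p+1$, which is the desired bound. Part (iii) is then immediate: by the definition of $cost(S,\Q)$ we have $cost(S,\Q) = qK + F_{\max}$, and substituting the bound from (ii) for $F_{\max}$ yields the claimed $qK+(\lceil n/q\rceil-1)p+1$.

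The only step that requires any care is the inequality $r_{j_1}\leq \tau_i-(n'-1)p$ in part (i): it uses in an essential way that the release dates are uniformly spaced by exactly $p$ (i.e.\ the input is $p$-regular), so that the span of $n'$ consecutive release dates among the served jobs is exactly $(n'-1)p$. Once this is in place, the rest is routine pigeonhole and a direct summation of costs, so I do not anticipate a substantive obstacle.
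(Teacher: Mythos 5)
Your proposal is correct and follows essentially the same route as the paper: the lower bound on the flow time of the earliest job served by a heavily-loaded replenishment, then pigeonhole for (ii), then adding the replenishment cost for (iii). The only cosmetic difference is that you spell out the step $S_{j_1}\geq\tau_i$ explicitly, which the paper leaves implicit (and your word ``precisely'' for the served release dates should really be ``at least spaced by,'' but the inequality you need holds either way, exactly as in the paper).
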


\begin{proof}
Let $\tau$ be the time of the replenishment, and denote with $f$ and $\ell$ the first and the last job in the schedule for which the replenishment is in $\tau$. Then, $\tau \geq r_{\ell} = r_f+(n'-1)p$, from which $i)$ follows.

If there are $n$ jobs, then by pigeonhole principle, there is a replenishment which provides resources for at least $\lceil n/q\rceil$ jobs. Then, $ii)$ follows from $i)$.

$iii)$ follows directly from $ii)$.
\end{proof}

\begin{proposition}\label{prop:opt_q_repl}
For the $p$-regular input, the minimum cost of any solution with  $q$ replenishments is  $qK+(\lceil n/q\rceil-1)p+1$.
\end{proposition}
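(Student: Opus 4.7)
The lower bound follows directly from Observation~\ref{obs:flowtime_from_replenishment}(iii). For the upper bound, the plan is to exhibit a feasible solution with exactly $q$ replenishments whose cost equals $qK+(\lceil n/q\rceil-1)p+1$. Set $m:=\lceil n/q\rceil$. I would partition the $n$ jobs, taken in order of their release dates, into $q$ consecutive batches $B_1,\ldots,B_q$ of sizes $k_1\leq k_2\leq\cdots\leq k_q$ with every $k_i\in\{m-1,m\}$ (so the smaller batches come first), place the $i$th replenishment $\tau_i$ at the release date of the last job of $B_i$, and process each batch back-to-back on the machine starting at $\max(\tau_i,E_{i-1})$, where $E_{i-1}$ is the completion time of $B_{i-1}$ (with $E_0:=0$).

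The key step is to verify inductively that $\tau_i\geq E_{i-1}$ for every $i$, i.e.\ that the machine is idle at the moment of each replenishment. Since the last jobs of $B_{i-1}$ and $B_i$ are separated by exactly $k_i$ consecutive release dates spaced $p$ apart, we have $\tau_i-\tau_{i-1}=k_ip$, while the inductive hypothesis $\tau_{i-1}\geq E_{i-2}$ yields $E_{i-1}=\tau_{i-1}+k_{i-1}$. The desired inequality therefore reduces to $k_ip\geq k_{i-1}$, which holds because the smaller-first ordering forces $k_i\geq k_{i-1}$ and $p\geq 1$.

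Granted this idleness property, the first job of $B_i$ attains flow time $\tau_i+1-r_{f_i}=(k_i-1)p+1$, where $r_{f_i}$ denotes its release date; by Observation~\ref{obs:block_flowtime} the remaining jobs in $B_i$ have no larger flow time. Since $\max_i k_i=m$, the maximum flow time of the whole schedule is $(m-1)p+1$, and hence the total cost is $qK+(m-1)p+1$, matching the lower bound.

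The hard part is nothing deep but rather the bookkeeping that forces the smaller-first arrangement of batches; if a batch of size $m-1$ followed one of size $m$, then for small $p$ the machine could still be busy at the next replenishment, which could inflate a later flow time. Once the ordering is fixed, everything reduces to the direct computations above.
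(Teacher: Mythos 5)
Your proof is correct and follows the same basic strategy as the paper's: the lower bound is exactly Observation~\ref{obs:flowtime_from_replenishment}(iii), and the upper bound comes from an explicit solution with $q$ balanced batches of sizes $\lfloor n/q\rfloor$ or $\lceil n/q\rceil$, each replenished at the release date of its last job, with the flow-time bound propagated through a batch via Observation~\ref{obs:block_flowtime}. The one substantive difference is the ordering of the batch sizes: the paper places the larger batches first, whereas you place them last and add an explicit inductive verification that the machine is idle at every replenishment time (the inequality $k_i p\geq k_{i-1}$). This is a genuine gain in rigor. With the paper's large-first ordering and $p=1$, the machine can still be busy at the replenishment following the transition from size-$\lceil n/q\rceil$ to size-$\lfloor n/q\rfloor$ batches, so the paper's assertion $C_j=\tau_{i+1}+k$ does not literally hold there; the bound survives only because, by Observation~\ref{obs:fmax_idle}, the maximum flow time is attained at the first job of a maximal busy block, and that job does start at its own replenishment time with flow time $(|B_i|-1)p+1\leq(\lceil n/q\rceil-1)p+1$. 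Your smaller-first ordering sidesteps this subtlety entirely, although your closing worry that a busy machine ``could inflate a later flow time'' is in fact never realized for balanced batches, for the reason just given. Either way the computation $(k_i-1)p+1\leq(\lceil n/q\rceil-1)p+1$ closes the argument, so the proposal stands; the only implicit assumption (shared with the paper) is $q\leq n$, so that every batch is nonempty.
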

\begin{proof}
We construct a feasible solution with $q$ replenishments and  total cost as claimed.

Let $r$ be such that $n=q\lfloor n/q \rfloor + r$.
Let 

\[ \tau_i :=
  \begin{cases}
  -p, & \text{ if } i=0,\\
  i p\left\lceil n/q \right\rceil, &  \text{ if } i\in \{1,\ldots,r\}, \\[1ex]
    i p\left\lfloor n/q \right\rfloor, &  \text{ if } i\in \{r+1,\ldots,q\},
  \end{cases}
\]
and schedule the jobs in increasing release date order as soon as possible.
Let $j$ be the $k^{th}$ job  ($k \geq 1$)  arriving after $\tau_i$, but not later than $\tau_{i+1}$ for $i=0,\ldots,q-1$. Then $\tau_i<r_j=\tau_i+kp\leq \tau_{i+1}$, and $C_j=\tau_{i+1}+k$, thus $F_j=(\tau_{i+1}-\tau_i)+k(1-p)\leq \lceil n/q\rceil p+k(1-p)$.
By Observation~\ref{obs:block_flowtime}, this expression is maximal if $k=1$. 
Therefore, $F_{\max}\leq (\lceil n/q\rceil-1)p+1$ and the cost of this solution is at most $qK+(\lceil n/q\rceil-1)p+1$. 
Equality follows from Observation~\ref{obs:flowtime_from_replenishment}.
\end{proof}
Now it follows immediately that
\begin{lemma}\label{lem:opt_balanced}
For  $p$-regular input, the offline optimum is
\[
OPT(R_n) = \min_{q\in \mathbb{Z}_{\geq 1}} \left(Kq + (\lceil n/q \rceil-1)p + 1\right).
\]
The $q^*$ giving the minimum value is the number of replenishments in an optimal solution.
\end{lemma}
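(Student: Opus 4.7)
The plan is to argue that the right-hand side provides both an upper and a lower bound on $OPT(R_n)$, and that both bounds follow almost directly from Proposition~\ref{prop:opt_q_repl} together with Observation~\ref{obs:flowtime_from_replenishment}. The key point is that the cost of any feasible solution decomposes as the number of replenishments times $K$, plus the maximum flow time, so once the number of replenishments $q$ is fixed, the problem reduces to minimizing $F_{\max}$ under that cardinality constraint.

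For the lower bound, I would take any feasible solution with, say, $q$ replenishments (noting $q \geq 1$ whenever $n \geq 1$, since every job needs to be covered). By Observation~\ref{obs:flowtime_from_replenishment}(iii), its cost is at least $qK + (\lceil n/q\rceil - 1)p + 1$, and hence at least the minimum of this expression over $q \in \mathbb{Z}_{\geq 1}$. For the upper bound, for each $q$ Proposition~\ref{prop:opt_q_repl} exhibits a feasible solution whose cost equals $qK + (\lceil n/q \rceil - 1)p + 1$; taking the $q$ that minimizes this quantity therefore yields $OPT(R_n) \leq \min_q (Kq + (\lceil n/q\rceil - 1)p + 1)$. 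Combining the two inequalities gives the claimed equality, and the $q^*$ realizing the minimum is, by the same construction, the number of replenishments in the resulting optimal solution.

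There is essentially no obstacle here; the lemma is a direct aggregation of Proposition~\ref{prop:opt_q_repl} (optimal cost conditional on $q$) with the obvious fact that any solution uses some well-defined number of replenishments. The only small care needed is to mention that the minimum is attained (the expression tends to infinity as $q \to \infty$, so only finitely many values of $q$ need to be inspected, in fact only $q \in \{1, \dots, n\}$), which guarantees that a minimizer $q^*$ exists and corresponds to a concrete optimal schedule.
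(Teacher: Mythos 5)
Your argument is correct and matches the paper's, which simply states that the lemma ``follows immediately'' from Proposition~\ref{prop:opt_q_repl}: the lower bound is Observation~\ref{obs:flowtime_from_replenishment}(iii) applied to an arbitrary feasible solution, and the upper bound is the explicit construction of Proposition~\ref{prop:opt_q_repl} at the minimizing $q$. Your added remark that the minimum is attained (only $q\in\{1,\dots,n\}$ matters) is a small but welcome point of care that the paper leaves implicit.
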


Next we derive lower and upper bounds on the optimum for $p$-regular input. To this end, we define the function $f(q)$:
\[
f(q) = Kq + (n/q -1)p+ 1.
\] 
Note that $f(q)$ is quite similar to the expression for $OPT(R_n)$ in Lemma~\ref{lem:opt_balanced}.
\begin{lemma}\label{lem:opt_lb} For  $p$-regular input, $OPT(R_n) \geq \min_{q \in \mathbb{R}_{>0}} f(q) = 2\sqrt{npK}-p+1$.
\end{lemma}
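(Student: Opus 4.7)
The plan is to chain two inequalities: first to pass from the integer-valued expression of Lemma~\ref{lem:opt_balanced} to the relaxed real-valued function $f(q)$, and then to compute the unconstrained minimum of $f$ on $\mathbb{R}_{>0}$ in closed form.

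First I would observe that for every positive integer $q$, $\lceil n/q \rceil \geq n/q$, so
\[
Kq + \bigl(\lceil n/q \rceil - 1\bigr)p + 1 \;\geq\; Kq + \bigl(n/q - 1\bigr)p + 1 \;=\; f(q).
\]
Combined with Lemma~\ref{lem:opt_balanced}, this gives
\[
OPT(R_n) \;=\; \min_{q \in \mathbb{Z}_{\geq 1}} \Bigl(Kq + (\lceil n/q \rceil - 1)p + 1\Bigr) \;\geq\; \min_{q \in \mathbb{Z}_{\geq 1}} f(q) \;\geq\; \min_{q \in \mathbb{R}_{>0}} f(q),
\]
the last inequality being just enlargement of the feasible set.

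Next I would compute $\min_{q \in \mathbb{R}_{>0}} f(q)$. Since $f$ is smooth on $(0,\infty)$ with $f'(q) = K - np/q^2$, the unique critical point is $q^* = \sqrt{np/K}$. As $f''(q) = 2np/q^3 > 0$, the function is strictly convex on $(0,\infty)$, so $q^*$ is a global minimizer. Substituting back,
\[
f(q^*) \;=\; K\sqrt{np/K} + \frac{np}{\sqrt{np/K}} - p + 1 \;=\; \sqrt{npK} + \sqrt{npK} - p + 1 \;=\; 2\sqrt{npK} - p + 1,
\]
which yields the claimed lower bound.

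I do not anticipate a real obstacle here: once Lemma~\ref{lem:opt_balanced} is in hand, the whole argument is (i) a one-line relaxation removing the ceiling and the integrality of $q$, and (ii) a standard AM--GM / calculus minimization of $Kq + np/q$. The only mildly delicate point is to make sure the relaxation of $q$ from $\mathbb{Z}_{\geq 1}$ to $\mathbb{R}_{>0}$ is justified, which it is because we are proving a lower bound and the minimum over a larger set can only be smaller or equal.
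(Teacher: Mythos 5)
Your proposal is correct and follows essentially the same route as the paper: drop the ceiling to pass from the exact expression of Lemma~\ref{lem:opt_balanced} to $f(q)$, relax $q$ from $\mathbb{Z}_{\geq 1}$ to $\mathbb{R}_{>0}$, and minimize $f$ by calculus at $q^*=\sqrt{np/K}$. You simply spell out the intermediate steps (the ceiling bound, convexity, and the substitution) that the paper leaves implicit.
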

\begin{proof}

By Lemma~\ref{lem:opt_balanced}, we can derive
\[
OPT(R_n) = Kq^{\star} + \left( \left \lceil {n}/{q^{\star}} \right \rceil -1 \right ) p + 1 \geq f(q^*) \geq \min_{q \in \mathbb{R}_{>0}} f(q).
\]
This expression is minimal if $q=\sqrt{np/K}$, for which we obtain the minimum value of $2\sqrt{npK}-p+1$. 
\end{proof}

\begin{lemma}\label{lem:opt_lb_diff}
For  $p$-regular input, $OPT(R_n) \leq 2\sqrt{npK} + K + 1$.
\end{lemma}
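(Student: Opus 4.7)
The plan is to upper bound the expression in Lemma~\ref{lem:opt_balanced} by evaluating it at a carefully chosen integer $q$ and then handling the ceiling carefully. The proof of Lemma~\ref{lem:opt_lb} already identified $q = \sqrt{np/K}$ as the continuous minimizer of $f(q) = Kq + (n/q-1)p + 1$, so the natural candidate is the smallest integer above it, namely $\hat q := \lceil \sqrt{np/K}\, \rceil$. Plugging this $\hat q$ into Lemma~\ref{lem:opt_balanced} gives
\[
OPT(R_n) \;\leq\; K\hat q + (\lceil n/\hat q \rceil - 1) p + 1,
\]
and it remains to bound the right-hand side by $2\sqrt{npK} + K + 1$.

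I would bound the two main terms separately. For the replenishment term, $\hat q \leq \sqrt{np/K} + 1$ from the standard inequality $\lceil x \rceil \leq x + 1$, yielding $K\hat q \leq \sqrt{npK} + K$. For the flow time term, $\hat q \geq \sqrt{np/K}$ directly from the definition of the ceiling, so $n/\hat q \leq \sqrt{nK/p}$; combined with $\lceil x \rceil - 1 \leq x$ (valid for all $x \geq 0$), this gives
\[
(\lceil n/\hat q \rceil - 1) p \;\leq\; (n/\hat q)\, p \;\leq\; \sqrt{nK/p}\cdot p \;=\; \sqrt{npK}.
\]
Adding the two bounds and the trailing $+1$ yields exactly $2\sqrt{npK} + K + 1$.

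There is essentially no obstacle here: once $\hat q$ is chosen, everything reduces to routine ceiling manipulations. The only bookkeeping check is that $\hat q \geq 1$, so that it is a legitimate argument for Lemma~\ref{lem:opt_balanced}; this holds because $n, p, K \geq 1$ are positive integers, hence $\sqrt{np/K} > 0$ and its ceiling is at least $1$. Note also that the slack between this upper bound and the lower bound of Lemma~\ref{lem:opt_lb} is $K + p$, which is precisely the combined rounding loss incurred on the two terms.
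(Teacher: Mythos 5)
Your proof is correct, and it takes a genuinely more direct route than the paper. You simply evaluate the formula of Lemma~\ref{lem:opt_balanced} at the explicit integer $q = \bigl\lceil \sqrt{np/K}\,\bigr\rceil$ and bound the two terms separately: $K\lceil \sqrt{np/K}\,\rceil \leq \sqrt{npK}+K$ and $(\lceil n/q\rceil - 1)p \leq (n/q)p \leq \sqrt{npK}$, both of which check out (note $K\sqrt{np/K}=\sqrt{npK}$ and $p\sqrt{nK/p}=\sqrt{npK}$), and the requirement $q\geq 1$ is verified. The paper instead works with the true integer minimizer $q^\star$, compares it to the continuous minimizer $\sqrt{np/K}$ of $f(q)=Kq+(n/q-1)p+1$ via a convexity argument asserting $|\sqrt{np/K}-q^\star|<1$, and splits into two cases depending on which side of the continuous optimum $q^\star$ falls; this yields the same bound but requires more bookkeeping and relies on a structural claim about $q^\star$ that your argument never needs. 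What your approach buys is economy and transparency: since any feasible integer $q$ gives an upper bound on the minimum, there is no need to reason about the optimal $q^\star$ at all, and the two rounding losses ($K$ from rounding $q$ up, and the absorbed $p$ from the flow-time term) are visible term by term. What the paper's approach buys, if anything, is information about where the optimal $q^\star$ lies relative to $\sqrt{np/K}$, but that information is not used elsewhere for this lemma.
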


\begin{proof}
By Lemma~\ref{lem:opt_balanced}, there exists some $q^* \in \mathbb{Z}_{\geq 1}$ such that
\[
OPT(R_n) = Kq^{\star} + \left( \left \lceil {n}/{q^{\star}} \right \rceil -1 \right ) p + 1.
\]

It is easy to see that $f(q^{\star}) \leq OPT(R_n) \leq f(q^{\star})+p$. 

Let $\hat{q} = \sqrt{np/K} $ be the point minimizing $f(q)$ on the positive orthant.
Observe that $f(q)$ is a convex function, hence $|\hat{q}-q^{\star}| < 1$ holds. We distinguish two cases.

First assume $\hat{q} \leq q^{\star} < \hat{q}+1$. Then
\[f(q^{\star}) = Kq^{\star} + np/q^{\star} + 1 \leq K(\hat{q}+1) + np/\hat{q} + 1 = f(\hat{q})+K = 2 \sqrt{npK}-p+1 + K.\]
Hence, $OPT(R_n)\leq 2 \sqrt{npK}+K+1$.

Second, assume $q^{\star} < \hat{q} \leq q^{\star}+1$, and we verify that
\[
f(\hat{q}) = 2 \sqrt{npK}-p+1 \geq OPT(R_n)-p-K,
\]
from which the statement follows. To see this, we compute
\[
\begin{split}
f(\hat{q}) & = K\hat{q}+ (n/{\hat{q}} -1)p + 1 \geq K q^{\star} + (n/{(q^{\star}+1)} -1)p+ 1 \\
& \geq K (q^\star +1) + \left ( \left \lceil n/(q^\star+1)\right \rceil-1 \right )p + 1 -p - K \geq OPT(R_n)-p-K,
\end{split}
\]
where the first inequality follows from $q^\star < \hat{q} \leq q^\star+1$ by assumption, the second from the properties of integer rounding, and the  last from the definition of $q^\star$.
\end{proof}

Let $I$ be an $p$-bounded input, where the first job arrives in $t_{\min}$, and the last job arrives in $t$ (this means that the last job can be completed earliest in $t+1$). Consider the regular input and the $p$-regular input between $t_{\min}$ and $t$ denoted by $D_I$ and $R_I$, respectively. See Figure~\ref{fig:inputs} for an illustration of these three different inputs in the case of $p=3$.

\begin{figure}[h]
\centering
\begin{tikzpicture}

\def\ox{0} 
\def\oy{0} 

\def\tmin{1.2}
\def\tmax{6.7}

\coordinate(o) at (\ox,\oy); 
\coordinate(utmin) at (\tmin,\oy);
\coordinate(utmax) at (\tmax,\oy);

\tikzstyle{mystyle}=[draw, minimum height=0.5cm,rectangle, inner sep=0pt,font=\scriptsize]

\def\tl{10} 

\draw[] (0,0) node[above right] {$I$};

\draw[-] ($(utmin)-(0.2,-0.12)$) -- ($(utmin)-(0.2,0.08)$) node[below] {$t_{\min}$};
\draw[-] ($(utmax)-(0.2,-0.12)$) -- ($(utmax)-(0.2,0.08)$) node[below] {$t+1$};

\draw[-] ($(utmin)$) -- ($(utmax)-(0.2,0)$) node[ ] {$ $};

\draw [draw=black, fill = black!20] (1,0) rectangle (1.5,0.5) ;
\draw [draw=black, fill = black!20] (1.5,0) rectangle (2,0.5);
\draw [draw=black, fill = black!20] (2.5,0) rectangle (3,0.5);
\draw [draw=black, fill = black!20] (4,0) rectangle (4.5,0.5);
\draw [draw=black, fill = black!20] (4.5,0) rectangle (5,0.5);
\draw [draw=black, fill = black!20] (5,0) rectangle (5.5,0.5);
\draw [draw=black, fill = black!20] (6,0) rectangle (6.5,0.5);

\end{tikzpicture}

\begin{tikzpicture}

\def\ox{0} 
\def\oy{0} 

\def\tmin{1.2}
\def\tmax{6.7}

\coordinate(o) at (\ox,\oy); 
\coordinate(utmin) at (\tmin,\oy);
\coordinate(utmax) at (\tmax,\oy);

\tikzstyle{mystyle}=[draw, minimum height=0.5cm,rectangle, inner sep=0pt,font=\scriptsize]

\def\tl{10} 

\draw[] (0,0) node[above right] {$D_I$};

\draw[-] ($(utmin)-(0.2,-0.12)$) -- ($(utmin)-(0.2,0.08)$) node[below] {$t_{\min}$};
\draw[-] ($(utmax)-(0.2,-0.12)$) -- ($(utmax)-(0.2,0.08)$) node[below] {$t+1$};

\draw[-] ($(utmin)$) -- ($(utmax)-(0.2,0)$) node[ ] {$ $};

\draw [draw=black, fill = black!20] (1,0) rectangle (1.5,0.5);
\draw [draw=black, fill = black!20] (1.5,0) rectangle (2,0.5);
\draw [draw=black, fill = black!20] (2,0) rectangle (2.5,0.5);
\draw [draw=black, fill = black!20] (2.5,0) rectangle (3,0.5);
\draw [draw=black, fill = black!20] (3,0) rectangle (3.5,0.5);
\draw [draw=black, fill = black!20] (3.5,0) rectangle (4,0.5);
\draw [draw=black, fill = black!20] (4,0) rectangle (4.5,0.5);
\draw [draw=black, fill = black!20] (4.5,0) rectangle (5,0.5);
\draw [draw=black, fill = black!20] (5,0) rectangle (5.5,0.5);
\draw [draw=black, fill = black!20] (5.5,0) rectangle (6,0.5);
\draw [draw=black, fill = black!20] (6,0) rectangle (6.5,0.5);

\end{tikzpicture}

\begin{tikzpicture}

\def\ox{0} 
\def\oy{0} 

\def\tmin{1.2}
\def\tmax{6.7}

\coordinate(o) at (\ox,\oy); 
\coordinate(utmin) at (\tmin,\oy);
\coordinate(utmax) at (\tmax,\oy);

\tikzstyle{mystyle}=[draw, minimum height=0.5cm,rectangle, inner sep=0pt,font=\scriptsize]

\def\tl{10} 

\draw[] (0,0) node[above right] {$R_I$};

\draw[-] ($(utmin)-(0.2,-0.12)$) -- ($(utmin)-(0.2,0.08)$) node[below] {$t_{\min}$};
\draw[-] ($(utmax)-(0.2,-0.12)$) -- ($(utmax)-(0.2,0.08)$) node[below] {$t+1$};

\draw[-] ($(utmin)$) -- ($(utmax)-(0.2,0)$) node[ ] {$ $};

\draw [draw=black, fill = black!20] (1,0) rectangle (1.5,0.5);
\draw [draw=black, fill = black!20] (2.5,0) rectangle (3,0.5);
\draw [draw=black, fill = black!20] (4,0) rectangle (4.5,0.5);
\draw [draw=black, fill = black!20] (5.5,0) rectangle (6,0.5);

\end{tikzpicture}
\caption{The inputs $I$, $D_I$ and $R_I$ for $p=3$.}\label{fig:inputs}
\end{figure}
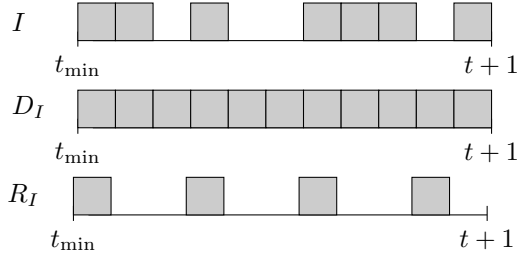

\begin{proposition}\label{prop:opt_R_I}
$OPT(D_I) \geq OPT(I) \geq OPT(R_I)$.
\end{proposition}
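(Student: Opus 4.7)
I would prove the two inequalities separately. For $OPT(D_I)\geq OPT(I)$ the plan is to apply Observation~\ref{obs:subset} directly: since $I$ is a $p$-bounded input with distinct integer release dates in $[t_{\min},t]$, every job of $I$ (identified with its release date) corresponds to a job of $D_I$, which by definition has a job at every integer of $[t_{\min},t]$. Thus $I\subseteq D_I$ as inputs and Observation~\ref{obs:subset} immediately yields $OPT(I)\leq OPT(D_I)$.

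For the second inequality $OPT(I)\geq OPT(R_I)$, the plan is to take an optimal $(S^\star,\Q^\star)$ for $I$ with $q$ replenishments at times $\tau_1<\cdots<\tau_q$ and explicitly build a feasible $R_I$-solution of cost at most $OPT(I)$. By Observations~\ref{obs:repl_times_increasing} and~\ref{obs:repl_times_rel_dates} I may assume the $I$-jobs are processed in release-date order and grouped into $q$ batches $B_1,\ldots,B_q$, where $B_i$ consists of exactly the $I$-jobs covered by $\tau_i$. The natural construction is to reuse the same replenishment set $\Q^\star$ for $R_I$ and to place each job $k$ of $R_I$ into the unique batch $B_i'$ with $r_k^R\in(\tau_{i-1},\tau_i]$ (with $\tau_0:=-\infty$), scheduling each non-empty $B_i'$ consecutively starting at $\max(\tau_i,\text{end of }B_{i-1}')$. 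Since this uses at most $q$ replenishments, it suffices to show that the resulting maximum flow time satisfies $F_{\max}^{R_I}\leq F_{\max}^{I}$, which combined with the replenishment count gives $\mathrm{cost}\leq Kq+F_{\max}^{I}=OPT(I)$.

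The main obstacle is precisely the flow-time bound. By Observation~\ref{obs:block_flowtime} it suffices to bound the flow time of the first job of each non-empty $R_I$-batch. When the first $R_I$-job of $B_i'$ has release at least $r^I_{f_i}$ (where $f_i$ is the first $I$-job of $B_i$), the bound $F_{\text{first}}^{R_I}\leq \tau_i+1-r^I_{f_i}=F^I_i\leq F^I_{\max}$ is immediate since both batches share the replenishment $\tau_i$. The delicate case is $r^R_{\mathrm{first}}<r^I_{f_i}$, which can occur because both releases lie in the window $(\tau_{i-1},\tau_{i-1}+p]$ (by $p$-boundedness of $I$ and $p$-regularity of $R_I$) without being forced to coincide. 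I would resolve this case by exploiting the regular spacing of $R_I$ together with $p$-boundedness to trace the offending $R_I$-job back to a flow time already paid for by an earlier batch of the $I$-schedule whose first job is close to $t_{\min}$, so that the per-batch comparison may fail but the overall maximum is still dominated by $F^I_{\max}$. This is the technical heart of the argument and is where the $p$-bounded structure of $I$ is used nontrivially.
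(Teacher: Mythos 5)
Your first inequality is fine and coincides with the paper's argument: $I\subseteq D_I$ together with Observation~\ref{obs:subset}. The problem is the second inequality, where you explicitly leave the decisive step unproven. You set up a batch-by-replenishment construction, correctly isolate the only problematic situation (a batch-leading $R_I$-job released strictly earlier than the first $I$-job of the same batch), and then offer only an intention --- ``trace the offending job back to a flow time already paid for by an earlier batch whose first job is close to $t_{\min}$'' --- rather than an argument. This is a genuine gap, not a routine verification: if you actually carry out the per-batch comparison, the delicate case loses an additive term of roughly $p$ (the batch-leading $R_I$-job can be released up to $p-1$ earlier than the batch-leading $I$-job, both lying in $(\tau_{i-1},\tau_{i-1}+p]$), so the comparison only yields $F^{R_I}_{\max}\leq F^{I}_{\max}+p-1$, and it is not clear how the proposed ``tracing back'' recovers the missing $p-1$; in particular there is no reason the relevant earlier batch should start anywhere near $t_{\min}$.

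The paper sidesteps the batch decomposition entirely by matching jobs positionally rather than by replenishment interval. Since $I$ is $p$-bounded and starts at $t_{\min}$, its $i$-th job satisfies $r_{j_i'}\leq t_{\min}+(i-1)p=r_{j_i}$, where $j_i$ is the $i$-th job of $R_I$; the paper then keeps $\Q^{\star}$ and assigns the $i$-th $R_I$-job the start time $S^{\star}_{j_i'}$ of the $i$-th $I$-job, so each individual flow time can only decrease ($S^{\star}_{j_i'}+1-r_{j_i}\leq S^{\star}_{j_i'}+1-r_{j_i'}$) while the replenishment cost is unchanged. This index-wise matching is the key idea your proposal is missing: your decomposition compares the first job of an $R_I$-batch with the first job of an $I$-batch, which in general do not have the same index, and that mismatch is precisely what creates your unresolved case. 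To complete a proof along your lines you would need either to switch to the positional matching or to supply a genuine amortization argument for the delicate case; as written, the proposal does not establish $OPT(I)\geq OPT(R_I)$.
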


\begin{proof}
Consider an optimal solution  for the input $D_I$. Since $I \subseteq D_I$, by removing the jobs in $D_I \setminus I$ from this optimal solution, we obtain a feasible solution for $I$. Therefore $OPT(D_I) \geq OPT(I)$.

Now consider an optimal solution $(S^\star,Q^\star)$ for the input $I$, with maximum flow time of $F_{\max}^\star$. Observe that the number of jobs in $I$ is at least the number of jobs in $R_I$ (otherwise $I$ would not be a $p$-bounded input). Let $j_i$  and  $j_i'$ be the $i^{th}$ job in $R_I$ and $I$, respectively. It is easy to see that $r_{j_i} \geq r_{j_i'}$ for $i \in \{1, \ldots, |R_I|\}$.

We are going to create a feasible solution $(S,Q)$ for $R_I$ from the optimal solution for $I$: let $Q := Q^\star$, and $S_{j_i} := S^\star_{j_i'}$ for $i \in \{1, \ldots, |R_I|\}$. Since $r_{j_i} \geq r_{j_i'} \ \forall \ i$, this is indeed a feasible solution for $R_I$, with maximum flow time of at most $F_{\max}^\star$, and with the same replenishment cost. Therefore, $OPT(I) \geq OPT(R_I)$.
\end{proof}

\section{Online algorithm for the general input}\label{sec:online_general}

Consider Algorithm~\ref{alg:online_2comp_new}:
\begin{algorithm}
\caption{Online algorithm for the general input}
Initialization: $ t:= 0, F_{\max} := 0$.
\begin{enumerate}
\item Determine the set $B_t$ of unscheduled jobs at time $t$. \label{step:first_new}

\item Let $\Fmax^u$ be the maximum flow time of the jobs in $B_t$ if they are scheduled from $t$ in non-decreasing order of the release dates without gap.
\item If $\Fmax^u = \Fmax+K$, then replenish the resource, start the jobs of $B_t$ from $t$, $t := t+|B_t|$, and $\Fmax := \Fmax+K$. If the last job has already been scheduled, then STOP. 
\label{step:replenish_new}

\item If no job is scheduled at $t$, then $t := t+1$.
\item Go to step~\ref{step:first_new}.
\end{enumerate}
\label{alg:online_2comp_new}
\end{algorithm}

For an input $I$, denote the cost provided by Algorithm~\ref{alg:online_2comp_new} with $ALG(I)$, and let $q$ be the total number of replenishments, and
$\tau_1,\ldots,\tau_q$ the replenishment times.

Observe that $\tau_{i+1}-\tau_i >\tau_i-\tau_{i-1}$ holds for every $1\leq i<q$ (where $\tau_0 = 0$), i.e. jobs starting in $\tau_i$ are always finished before $\tau_{i+1}$. 

\begin{proposition}\label{prop:alg_ith_repl}
The cost of the algorithm at the $i^{th}$ replenishment is $2Ki$, with maximum flow time of $Ki$ for every $1 \leq i \leq q$.
\end{proposition}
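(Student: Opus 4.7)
The plan is to prove the proposition by induction on $i$, establishing the stronger invariant that right after the $i^{th}$ replenishment, (a) the value of the algorithm's internal variable $\Fmax$ equals $Ki$, (b) the actual maximum flow time among all jobs scheduled so far is also $Ki$, and (c) the cumulative replenishment cost is $Ki$. Clearly (a) and (c) together yield the total cost $2Ki$ claimed in the statement; (b) is needed only to keep the induction meaningful.

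For the base case $i=1$, the variable is initialised to $0$, so Step~\ref{step:replenish_new} fires the first time the condition $\Fmax^u = 0+K$ becomes true. At that moment the only scheduled batch is the current one, and by Observation~\ref{obs:block_flowtime} the largest flow time in a consecutively scheduled block is attained by its first (earliest release date) job, so the actual maximum flow time equals $\Fmax^u = K$. The variable is then updated to $K$ and exactly one replenishment has been made, confirming (a)--(c).

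For the inductive step, suppose the invariant holds after the $(i-1)^{th}$ replenishment, so the variable equals $K(i-1)$ and no job scheduled so far has flow time exceeding $K(i-1)$. Between replenishments the variable is not altered, and the algorithm advances $t$ one unit at a time. When the condition of Step~\ref{step:replenish_new} triggers the $i^{th}$ replenishment, we have $\Fmax^u = K(i-1)+K = Ki$, and the update sets the variable to $Ki$, establishing (a). By Observation~\ref{obs:block_flowtime} applied to the newly scheduled batch, the largest flow time introduced by this batch equals $\Fmax^u = Ki$, and combining with the inductive bound on the earlier batches we obtain an overall maximum flow time of $\max(K(i-1), Ki) = Ki$, giving (b). Item (c) is immediate from the induction hypothesis together with the single new replenishment. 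Summing the two components yields the cost $2Ki$.

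The only delicate point is ensuring that the actual maximum flow time of the full schedule really equals the value of the variable $\Fmax$ after each replenishment; this is precisely where Observation~\ref{obs:block_flowtime} is invoked, because it allows us to identify the flow time of a whole batch with that of its first job, which is in turn exactly $\Fmax^u$. Everything else reduces to bookkeeping on the number of replenishments performed so far.
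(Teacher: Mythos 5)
Your proof is correct and follows essentially the same route as the paper's: induction on the replenishment index, showing that each replenishment adds $K$ to both the flow-time component and the replenishment cost. The paper's version is terser and does not explicitly invoke Observation~\ref{obs:block_flowtime} to identify the batch's maximum flow time with $\Fmax^u$, but your added bookkeeping is sound and changes nothing substantive.
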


\begin{proof}
We prove this by induction. At the first replenishment, the maximum flow time is $K$, hence the cost of the algorithm is $2K$. Suppose that at $\tau_{i-1}$, the algorithm has a cost of $2K(i-1)$ with maximum flow time of $\Fmax = K(i-1)$. The $i^{th}$ replenishment occurs when the maximum flow time of the jobs reaches $\Fmax + K = Ki$, while the total cost increases to $2Ki$.
\end{proof}

\begin{proposition} \label{prop:tau_i_diff}
$\tau_{i}-\tau_{i-1} \geq Ki$ for every $1 \leq i \leq q$.
\end{proposition}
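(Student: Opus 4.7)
The plan is to argue directly from the structural properties established by Algorithm~\ref{alg:online_2comp_new}. Fix an index $i$ with $1 \leq i \leq q$ and let $j$ denote the first job scheduled in the block started at $\tau_i$; by step~2 of the algorithm this is the job of $B_{\tau_i}$ with the smallest release date. Its completion time is $\tau_i+1$, so its flow time equals $\tau_i+1-r_j$. By Observation~\ref{obs:block_flowtime} the remaining jobs of the block have non-increasing flow times, while the blocks started at $\tau_1,\ldots,\tau_{i-1}$ have maximum flow times $K,2K,\ldots,K(i-1)$ by Proposition~\ref{prop:alg_ith_repl}. Consequently, the overall maximum flow time $Ki$ realized right after the $i^{th}$ replenishment (again Proposition~\ref{prop:alg_ith_repl}) must coincide with the flow time of $j$, giving the identity $r_j=\tau_i+1-Ki$.

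The second step is to bound $r_j$ from below by $\tau_{i-1}+1$, using the convention $\tau_0=0$. Since $j$ is unscheduled at $\tau_i$ we have $j \in B_{\tau_i}$; on the other hand, step~\ref{step:replenish_new} of the algorithm schedules \emph{every} job of $B_{\tau_{i-1}}$ at time $\tau_{i-1}$, so $j \notin B_{\tau_{i-1}}$. Therefore $j$ was not yet released at time $\tau_{i-1}$, and since all release dates are integers, $r_j \geq \tau_{i-1}+1$. Plugging this into the identity $r_j=\tau_i+1-Ki$ and rearranging yields the desired bound $\tau_i-\tau_{i-1} \geq Ki$.

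I do not anticipate any genuine obstacle here. The one point that deserves attention is the claim that the new block's first job carries the overall maximum flow time of $Ki$; this follows from the monotonicity inside the block (Observation~\ref{obs:block_flowtime}) together with Proposition~\ref{prop:alg_ith_repl}, which guarantees that the earlier blocks contribute a maximum of only $K(i-1)$. Everything else is bookkeeping about the contents of $B_t$ at the successive replenishment times.
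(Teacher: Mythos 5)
Your proof is correct and follows essentially the same route as the paper's: identify the first job $j$ released after $\tau_{i-1}$ (equivalently, the earliest-released job of the block started at $\tau_i$), note that $r_j \geq \tau_{i-1}+1$ and that the maximum flow time $Ki$ at the $i^{th}$ replenishment is attained by $j$, and rearrange $Ki = \tau_i + 1 - r_j$. You simply spell out in more detail (via Observation~\ref{obs:block_flowtime} and Proposition~\ref{prop:alg_ith_repl}) why $j$ carries the maximum flow time, which the paper asserts without elaboration.
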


\begin{proof}
If $j$ is the first job released after the $(i-1)^{th}$ replenishment, then $r_j \geq \tau_{i-1}+1$. The maximum flow time at the $i^{th}$ replenishment is given by the flow time of $j$, which is $Ki$. Therefore, $Ki = F_j = \tau_i+1-r_j  \leq \tau_i-\tau_{i-1}$.
\end{proof}

For the sake of analyzing the performance of Algorithm~\ref{alg:online_2comp_new}, we define a special class of inputs.

\begin{definition}
We call an input $I$ \emph{sparse}, if $r_{j+1}-r_j \geq Kj$ for all $1 \leq j < n$, where $n$ is the number of jobs in $I$.
\end{definition}

\begin{proposition} \label{prop:opt_sparse} If $I$ is a sparse input consisting of $n$ jobs, then $OPT(I) = Kn+1$.
\end{proposition}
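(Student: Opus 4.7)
The plan is to prove the equality $OPT(I) = Kn + 1$ by matching upper and lower bounds. The upper bound is immediate: replenish at every release date $r_j$ and start job $j$ at $S_j = r_j$; sparsity gives $r_{j+1} - r_j \geq Kj \geq 1$, so consecutive jobs do not overlap, every $F_j = 1$, and the total cost is $Kn + 1$.

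For the lower bound, Observations~\ref{obs:repl_times_increasing} and~\ref{obs:repl_times_rel_dates} let me restrict to feasible solutions in which jobs are processed in increasing release-date order and each replenishment happens at some $r_j$. Any such solution with $q$ replenishments $\tau_1 < \ldots < \tau_q$ partitions the jobs (indexed $1, \ldots, n$ in release-date order) into $q$ contiguous groups $[a_i, b_i]$, and $\tau_i \geq r_{b_i}$. By Observation~\ref{obs:block_flowtime} the maximum flow time inside group $i$ is attained by $a_i$, and
\[
F_{a_i} \;=\; S_{a_i} + 1 - r_{a_i} \;\geq\; r_{b_i} + 1 - r_{a_i} \;\geq\; 1 + K \sum_{j = a_i}^{b_i - 1} j,
\]
where the last inequality applies the sparsity bound to each telescoped difference $r_{j+1} - r_j \geq Kj$.

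When $q = n$, every group is a singleton, so $F_{\max} \geq 1$ and the cost is at least $nK + 1$ without any further work. The core case is $q < n$, which I handle by a pigeonhole step. Since $b_q = n$, at most $q - 1$ of the indices in $\{n - q, \ldots, n - 1\}$ can be group-ends; as this set has $q$ elements, it contains some non-group-end $j^* \geq n - q$. The group $i^*$ containing $j^*$ then satisfies $a_{i^*} \leq j^* < b_{i^*}$, so
\[
\sum_{j = a_{i^*}}^{b_{i^*} - 1} j \;\geq\; j^* \;\geq\; n - q,
\]
yielding $F_{\max} \geq 1 + K(n - q)$ and hence a total cost of at least $qK + 1 + K(n - q) = nK + 1$. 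The only mildly delicate ingredient is the pigeonhole step that produces $j^* \geq n - q$; the rest is bookkeeping on top of the observations and the sparsity definition, and I do not anticipate any further obstacle.
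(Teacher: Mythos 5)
Your proof is correct, and for the lower bound it takes a genuinely different route from the paper. The paper argues by contradiction: it fixes an optimal solution with $n-x$ replenishments and splits into two cases according to whether some job $j$ with $x<j\le n$ lacks a replenishment at its own release date (in the first case that job alone has flow time greater than $Kx+1$; in the second, jobs $1,\dots,x$ are all pushed to $r_{x+1}$, giving the first job flow time at least $Kx(x+1)/2+1$). You instead directly lower-bound the cost of an arbitrary solution with $q<n$ replenishments by partitioning the jobs into contiguous replenishment groups and applying a pigeonhole argument to the group-ends to extract a non-terminal index $j^*\ge n-q$, whence $F_{\max}\ge 1+K(n-q)$ and the total cost is at least $qK+K(n-q)+1=Kn+1$. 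Both proofs rest on the same sparsity insight --- serving a job of index $j$ from a later job's replenishment costs at least $Kj$ in flow time --- but your organization avoids the paper's case split and yields the cleaner quantitative statement that \emph{every} solution with $q$ replenishments costs at least $qK+K(n-q)+1$. Two small points you should make explicit: replenishments that serve no job only strengthen the bound (run the pigeonhole on the number $q'\le q$ of nonempty groups), and the inequality $S_{a_i}\ge r_{b_i}$ deserves one line of justification, namely that the common replenishment $\tau_i$ of group $i$ satisfies $r_{b_i}\le\tau_i\le S_{a_i}$ because it must lie in $[r_{b_i},S_{b_i}]$ and in $[r_{a_i},S_{a_i}]$ simultaneously. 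Neither is a genuine gap.
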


\begin{proof}

Let $(S,\Q)$ be the feasible solution, where every job is replenished and scheduled at its release date (that is, $S_j=r_j$ for every $1 \leq j \leq n$, and $Q = \{r_1, \ldots, r_n\}$). We are going to show that $(S,\Q)$ is optimal, from which the  statement follows, since $cost(S,\Q) = Kn+1$.

By contradiction, assume that the solution $(S,\Q)$ is not optimal. Consider an optimal solution $(S^\star, \Q^\star)$ consisting of $n-x$ replenishments, where $1 \leq x < n$. This means that $x$ replenishments are removed from $\Q$, and the jobs scheduled at these replenishment times in $S$ are scheduled later in $S^\star$.

If for some $x < j \leq n$, there is a job $j$ such that $r_j \not \in \Q^\star$, then in $S^\star$, $j$ starts not sooner than $r_{j+1}$. Hence, the flow time of $j$ is not smaller than $r_{j+1}+1-r_j \geq Kj+1 > Kx+1$. Therefore, $cost(S^\star, \Q^\star) > K(n-x)+Kx+1=Kn+1 = cost(S,\Q)$, contradiction.

It follows that there is no such job $j$. Since there are $n-x$ replenishments and $r_{x+1}, \ldots, r_n \in \Q^\star$, then the first $x$ replenishment times of $(S,\Q)$ has to be all removed from $(S^\star, \Q^\star)$, and jobs $1, \ldots, x$ start from $r_{x+1}$. Then the flow time of the first job in $S^\star$ is $r_{x+1}+1-r_1 \geq Kx(x+1)/2 + 1$, and every other job has smaller flow time. Therefore, $cost(S^\star, \Q^\star) \geq Kx(x+1)/2 + K(n-x)+1 > Kn+1 = cost(S,\Q)$, contradiction. 

Hence, the optimal solution is $(S,\Q)$.
\end{proof}

We can also make an observation regarding the behaviour of Algorithm~\ref{alg:online_2comp_new} for the sparse input.

\begin{proposition}\label{prop:alg_sparse}
If $I$ is a sparse input consisting of $n$ jobs, then Algorithm~\ref{alg:online_2comp_new} replenishes $n$ times. The cost of the solution is $2Kn$.
\end{proposition}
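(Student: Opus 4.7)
The plan is to prove by induction on $i \in \{1,\ldots,n\}$ that the $i$-th replenishment produced by Algorithm~\ref{alg:online_2comp_new} occurs at time $\tau_i = r_i + iK - 1$ and processes only the single job $i$. Once this is established, the algorithm necessarily performs exactly $n$ replenishments (one per job), and Proposition~\ref{prop:alg_ith_repl} applied at $i = n$ yields that $ALG(I) = 2Kn$.

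For the base case, at times $t < r_1$ the set $B_t$ is empty, the replenishment condition cannot fire, and the algorithm simply advances $t$. From $t = r_1$ onward, $B_t = \{1\}$ and scheduling job $1$ from $t$ gives $F_{\max}^u = t + 1 - r_1$, which increases by one per time step. The condition $F_{\max}^u = F_{\max} + K = K$ first holds at $t = r_1 + K - 1$. The sparsity hypothesis $r_2 - r_1 \geq K$ forces $r_2 > r_1 + K - 1 = \tau_1$, so job $2$ has not yet arrived, and the replenishment schedules only job $1$.

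For the inductive step, assume the first $i-1$ replenishments conform to the claim. Then immediately after the $(i-1)$-th replenishment, $t = \tau_{i-1} + 1 = r_{i-1} + (i-1)K$ and $F_{\max} = (i-1)K$. The sparsity bound $r_i - r_{i-1} \geq (i-1)K$ gives $r_i \geq t$, so job $i$ has not yet arrived (or arrives exactly now), and while $B_t = \emptyset$ the algorithm just advances. Once $B_t = \{i\}$, the quantity $F_{\max}^u = t + 1 - r_i$ again grows by one per time step, and the condition $F_{\max}^u = iK$ first fires at $t = r_i + iK - 1 =: \tau_i$. The sparsity bound $r_{i+1} - r_i \geq iK$ then yields $r_{i+1} > \tau_i$, so job $i+1$ (if it exists) has not arrived by $\tau_i$, confirming that the $i$-th replenishment processes only job $i$.

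The main obstacle is the careful matching between two arithmetic progressions: the inductive formula $\tau_i = r_i + iK - 1$ combined with $F_{\max} = iK$ after the $i$-th replenishment on one side, and the sparsity gap $r_{i+1} - r_i \geq iK$ on the other. The gap that matters for ruling out job $i+1$ at time $\tau_i$ is precisely $iK$ (not $(i+1)K$), and it is exactly what the definition of sparse input supplies. Once this alignment is verified, the conclusion $ALG(I) = 2Kn$ is an immediate invocation of Proposition~\ref{prop:alg_ith_repl}.
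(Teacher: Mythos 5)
Your proof is correct and follows essentially the same route as the paper's: an induction on the job index showing that the $i$-th replenishment occurs at $\tau_i = r_i + iK - 1$ and serves only job $i$, with the sparsity gap $r_{i+1}-r_i \geq iK$ guaranteeing that job $i+1$ has not yet arrived, and the cost $2Kn$ then following from Proposition~\ref{prop:alg_ith_repl}. Your version spells out the time-stepping of $F_{\max}^u$ in slightly more detail, but the argument is the same.
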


\begin{proof}
We are going to show that each job is replenished individually from which the first statement follows. We proceed by induction on the job index. The first job is released at $r_1$, and Algorithm~\ref{alg:online_2comp_new} replenishes and starts this job at time $\tau_1 = r_1+K-1$. Since $\tau_1 < r_2$, the second job gets a separate replenishment.

Suppose that the $(j-1)^{th}$ job is replenished at $\tau_{j-1} < r_j$, by that time there are $j-1$ replenishments, and the maximum flow time is $K(j-1)$ by Proposition~\ref{prop:alg_ith_repl}. The next job is released at $r_j$, therefore, the algorithm is going to replenish and start this job when its flow time reaches $Kj$, i.e., $\tau_j= Kj + r_j - 1 \leq r_{j+1}-1$, where the last inequality follows from the definition of the sparse input.

By Proposition~\ref{prop:alg_ith_repl}, the cost of the solution is $2Kn$.
\end{proof}

\begin{theorem}
Algorithm~\ref{alg:online_2comp_new} is 2-competitive for the general input.\label{thm:2-comp}
\end{theorem}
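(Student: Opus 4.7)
My plan is to reduce the analysis to the sparse input case. By Proposition~\ref{prop:alg_ith_repl}, if the algorithm performs $q$ replenishments on an input $I$, then $ALG(I)=2Kq$, so it will suffice to establish the lower bound $OPT(I) \geq Kq$. To this end, I single out one job per algorithm block: for each $i\in\{1,\ldots,q\}$, let $j_i$ denote the first job that the algorithm starts at $\tau_i$. By Observation~\ref{obs:block_flowtime} the maximum flow time inside the $i$-th block is attained at $j_i$, and by Proposition~\ref{prop:alg_ith_repl} this value equals $Ki$, which pins down the identity $\tau_i = r_{j_i}+Ki-1$ for every $i$.

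The crux of the argument will be to verify that the sub-instance $I':=\{j_1,\ldots,j_q\}$ is a sparse input in the sense of the paper. Plugging the identity above into the spacing bound $\tau_{i+1}-\tau_i \geq K(i+1)$ from Proposition~\ref{prop:tau_i_diff} yields
\[
r_{j_{i+1}}-r_{j_i} \;=\; (\tau_{i+1}-\tau_i)-K \;\geq\; Ki,
\]
which, after re-indexing the jobs of $I'$ consecutively as $1,2,\ldots,q$, is exactly the sparseness condition $r_{k+1}-r_k \geq Kk$. I expect this identification---the recognition that the first jobs of the algorithm's blocks already form a worst-case sparse instance for the offline problem---to be the main conceptual step; once it is in place, the verification is a short calculation that reuses only properties of the algorithm already proved in Propositions~\ref{prop:alg_ith_repl} and~\ref{prop:tau_i_diff}.

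Finally, Observation~\ref{obs:subset} gives $OPT(I) \geq OPT(I')$, and Proposition~\ref{prop:opt_sparse} evaluates $OPT(I')=Kq+1 \geq Kq$. Combined with $ALG(I)=2Kq$, this will deliver $ALG(I) \leq 2\,OPT(I)$ and hence 2-competitiveness. I do not foresee edge cases that complicate the argument: the only degenerate situation is $q=1$, for which any feasible solution already costs at least $K+1 > Kq/2$, so the bound is comfortably maintained.
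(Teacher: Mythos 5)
Your proposal is correct and follows essentially the same route as the paper's own proof: extract the first job of each replenishment block to form a sub-instance $I'$, show $I'$ is sparse via Proposition~\ref{prop:tau_i_diff}, and conclude with Observation~\ref{obs:subset} and Proposition~\ref{prop:opt_sparse}. Your derivation of the identity $\tau_i = r_{j_i}+Ki-1$ actually spells out a step the paper leaves implicit (passing from the spacing of the $\tau_i$ to the spacing of the release dates), so no changes are needed.
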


\begin{proof}
Consider an input $I$ for which Algorithm~\ref{alg:online_2comp_new} makes $q$ replenishments in time points $\tau_1, \ldots, \tau_q$.
Then $ALG(I) = 2Kq$ by Proposition~\ref{prop:alg_ith_repl}. For $1 \leq i \leq q$, denote with $f_{i}$ the job from $I$ that starts at $\tau_{i}$. We define a new input with these  $q$ jobs, $I' = \{f_1, \ldots, f_q\}$. By Proposition~\ref{prop:tau_i_diff}, we have $\tau_{j+1}-\tau_j \geq Kj$, hence $I'$ is a sparse input. By Propositions~\ref{prop:opt_sparse} and \ref{prop:alg_sparse}, $OPT(I') = Kq + 1$ and  $ALG(I') = 2Kq$, hence $ALG/OPT = 2Kq/(Kq+1) \leq 2$.

On the other hand, by Observation~\ref{obs:subset}, $OPT(I) \geq OPT(I')$, and we have already noticed that $ALG(I) = 2Kq$ and $ALG(I') = 2Kq$. Consequently,  $ALG(I)/OPT(I) \leq ALG(I')/OPT(I')$. Hence, the competitive ratio is the largest for the sparse inputs, for which the algorithm is 2-competitive. This concludes the theorem. Observe that the analysis is tight for the sparse input.
\end{proof}

\begin{theorem}\label{thm:sqrt2_comp}
For the $p$-bounded input, the competitive ratio of Algorithm~\ref{alg:online_2comp_new} tends to $\sqrt{2}$ as the number of jobs tends to infinity.
\end{theorem}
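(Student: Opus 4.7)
The plan is to combine the exact formula $ALG(I) = 2Kq$ from Proposition~\ref{prop:alg_ith_repl} with a lower bound on $OPT(I)$ that grows like $K\sqrt{2}\,q$, obtained by comparing $I$ with a $p$-regular input via Proposition~\ref{prop:opt_R_I} and then applying Lemma~\ref{lem:opt_lb}. The ratio $2Kq / (K\sqrt{2}\,q) = \sqrt{2}$ then gives the target in the limit.

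First, I would bound the time span of the release dates from below in terms of $q$. Let $r_1$ and $r_n$ denote the first and last release dates. By Proposition~\ref{prop:alg_ith_repl}, the maximum flow time at the $i$-th replenishment equals $Ki$, and by Observation~\ref{obs:block_flowtime} it is attained by the job of smallest release date in batch $i$; hence this smallest release date is $\tau_i - Ki + 1$. In particular $r_1 = \tau_1 - K + 1$ and $r_n \geq \tau_q - Kq + 1$. Summing $\tau_i - \tau_{i-1} \geq Ki$ from Proposition~\ref{prop:tau_i_diff} over $i=2,\ldots,q$ gives $\tau_q - \tau_1 \geq K\bigl(q(q+1)/2 - 1\bigr)$, and a short calculation yields $r_n - r_1 \geq Kq(q-1)/2$.

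Let $R_I$ denote the $p$-regular input on $[r_1, r_n]$. Since its release dates are spaced exactly $p$ apart, the previous bound gives $|R_I| \geq (r_n - r_1)/p \geq Kq(q-1)/(2p)$. Proposition~\ref{prop:opt_R_I} and Lemma~\ref{lem:opt_lb} then yield
\[
OPT(I) \;\geq\; OPT(R_I) \;\geq\; 2\sqrt{|R_I|\,pK} - p + 1 \;\geq\; K\sqrt{2q(q-1)} - p + 1.
\]

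To conclude, I would verify that $q \to \infty$ as $n \to \infty$, so that the asymptotic analysis is not vacuous. Here the $p$-bounded hypothesis enters a second time: for every $i < q$, the first release of batch $i+1$ satisfies $r_{\min, i+1} \leq \tau_i + p$ (since some release lies in batch $i$ at time $\leq \tau_i$ and the next release is at most $p$ later), which combined with $r_{\min,i+1} = \tau_{i+1} - K(i+1) + 1$ gives $\tau_{i+1} - \tau_i \leq K(i+1) + p - 1$; a similar argument bounds $\tau_q - \tau_{q-1}$ using $r_n \leq \tau_q$. Since release dates are distinct integers, each batch size satisfies $|B_i| \leq \tau_i - \tau_{i-1} \leq Ki + p - 1$, so $n = \sum_{i=1}^q |B_i| = O(Kq^2 + pq)$, forcing $q = \Omega(\sqrt{n/K})$. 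Putting everything together,
\[
\frac{ALG(I)}{OPT(I)} \;\leq\; \frac{2Kq}{K\sqrt{2q(q-1)} - p + 1} \;\to\; \sqrt{2} \quad \text{as } q \to \infty,
\]
which proves the theorem. The main obstacle I expect is the dual use of the $p$-bounded hypothesis: it must simultaneously furnish a lower bound on $|R_I|$ (to make $OPT$ large) and an upper bound on each $|B_i|$ (to force $q \to \infty$), with the intermediate inequality $r_{\min,i+1} \leq \tau_i + p$ tying the two directions together; minor off-by-one care is also needed for the first and last batches.
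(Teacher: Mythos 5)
Your proof is correct, and while it uses the same ingredients as the paper's proof ($ALG(I)=2Kq$ from Proposition~\ref{prop:alg_ith_repl}, the gap bound of Proposition~\ref{prop:tau_i_diff}, the comparison with the $p$-regular input $R_I$ via Proposition~\ref{prop:opt_R_I}, and Lemma~\ref{lem:opt_lb}), you run the counting argument in the opposite direction, and this is a genuine difference. The paper counts the jobs of $I$ in each interval $(\tau_{i-1},\tau_i]$ to get $n\geq Kq^2/(2p)$, hence $q\leq\sqrt{2pn/K}$, and then invokes $OPT(I)\geq 2\sqrt{npK}-p+1$ with $n=|I|$, so everything is parametrized by $n$. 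You instead lower-bound the time span, $r_n-r_1\geq Kq(q-1)/2$, deduce $|R_I|\geq Kq(q-1)/(2p)$, and obtain $OPT(I)\geq K\sqrt{2q(q-1)}-p+1$, parametrizing by $q$ and then checking separately that $q\to\infty$ as $n\to\infty$. Your bookkeeping is in fact the more robust one: for a $p$-bounded input one only has $|R_I|\leq n$, so the paper's step $OPT(I)\geq 2\sqrt{npK}-p+1$ does not literally follow from $OPT(I)\geq OPT(R_I)$ (consider a $1$-regular input viewed as $p$-bounded for large $p$), and the per-interval count $n_i\geq\lceil(\tau_i-\tau_{i-1})/p\rceil$ can fail for the last interval when the final job arrives well before $\tau_q$; your span-based bound uses only release dates that provably exist and is immune to both issues, because the quantity that is genuinely tied to $q$ is the time span rather than the number of jobs. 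The price is the extra step showing $q=\Omega(\sqrt{n/K})$, which you execute correctly, so that the limit in $q$ translates into the limit in $n$ demanded by the statement.
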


\begin{proof}
Consider a $p$-bounded input $I$ consisting of $n$ jobs, for which Algorithm~\ref{alg:online_2comp_new} makes $q$ replenishments in $\tau_1, \ldots, \tau_q$.

Let $n_i$ be the number of jobs released between $\tau_{i-1}+1$ and $\tau_i$ for $1 \leq i \leq q$. Since the input is $p$-bounded, we obtain that:

\[n_i \geq \left \lceil (\tau_i-\tau_{i-1})/p \right \rceil \geq \left \lceil Ki/p \right \rceil \geq Ki/p,\]

where the second inequality follows by Proposition~\ref{prop:tau_i_diff}. Hence, 

\[n = \sum_{i=1}^{q} n_i \geq \sum_{i=1}^{q} Ki/p = Kq(q+1)/2p \geq Kq^2/2p,\]

from which $q \leq \sqrt{2pn/K}$ follows. Therefore:

\[ALG(I) \leq 2Kq \leq 2K \sqrt{2np/K} = 2\sqrt{2npK}.\]

On the other hand, by Lemma~\ref{lem:opt_lb} and Proposition~\ref{prop:opt_R_I}, we have 

\[OPT(I) \geq 2\sqrt{npK}-p+1.\]

It follows that

\[\frac{ALG(I)}{OPT(I)} \leq \frac{2\sqrt{2npK}}{2\sqrt{npK}-p+1} = \frac{\sqrt{2npK}}{\sqrt{npK}-p/2+1/2} \rightarrow \sqrt{2}, \text{ if } n \rightarrow \infty,\]
hence, the statement is proved.

The analysis is tight: consider the $p$-regular input $R_n$, which is also $p$-bounded. By Lemma~\ref{lem:opt_lb_diff},  $OPT(R_n) \leq  2 \sqrt{npK}+K+1$. Therefore,

\[\frac{ALG(R_n)}{OPT(R_n)} \geq \frac{2\sqrt{2npK}}{2 \sqrt{npK}+K+1} = \frac{\sqrt{2npK}}{\sqrt{npK}+K/2+1/2} \rightarrow \sqrt{2} \text{ if } n \rightarrow \infty.\]
\end{proof}

\section{Numerical results}
\label{sec:num_res}

In this section we analyse the competitive ratio proposed in Section~\ref{sec:online_general}. We proved that the algorithm obtains a competitive ratio of 2, which tends to $\sqrt{2}$ in the case of $p$-bounded inputs for some constant $p$. A question arises as to where does the competitive ratio lie if the difference of two consecutive release date follows some probability distribution $D$. 

Formally, we generate an input consisting of $n$ jobs, where $r_j = X_1+\ldots+X_j$ for $1 \leq j \leq n$, and $X_j$ is a random variable chosen from a discrete distribution $D$, with possible values of $1, 2, 3$ etc. Note that if the distribution $D$ has a finite support, then it is straightforward that the competitive ratio of such inputs tend to $\sqrt{2}$ as $n$ tends to infinity, since if $D$ is upper bounded by some finite number $p$, then the inputs generated this way are $p$-bounded with probability 1. Hence, we assume that $D$ has infinite support.

We chose $D$ to be a geometric distribution with parameter $\beta$, supported on the set $\{1,2,3,\ldots\}$. That is, $P(X_j = k) = (1-\beta)^{k-1}\beta$ for every $1 \leq j \leq n$. We fixed the replenishment cost to $K=1$. We generated 1000 instances for different values of $\beta$ and $n$. For an input consisting of $n$ jobs, we ran the offline algorithm and Algorithm~\ref{alg:online_2comp_new}, respectively, to obtain the competitive ratio. Figures~\ref{fig:comp_ratios_0.01}-\ref{fig:comp_ratios_0.0001} show the results of the experiments for $\beta \in \{0.01,0.001,0.0001\}$ and different values of $n$.

The smaller $\beta$ is, the closer the competitive ratio is to 2, since the input becomes sparse with high probability. On the other hand, by increasing the number of jobs, the competitive ratio is quickly decreasing, and it tends to the ratio $\sqrt{2}$ as in the case of a $p$-bounded input. This is due to the fact that the time between two consecutive release dates has an expected value of $1/\beta$, hence, if the number of jobs is significantly big, the input becomes $1/\beta$-bounded with high probability. If $\beta$ is relatively large, the competitive ratio is close to $\sqrt{2}$ even for small values of $n$, see Figure~\ref{fig:comp_ratios_0.01}. As the value of $\beta$ decreases, larger numbers of jobs are needed to approach the desired ratio of $\sqrt{2}$, see Figures~\ref{fig:comp_ratios_0.001} and \ref{fig:comp_ratios_0.0001}.
 
\begin{figure}[h!]
\centering
\includegraphics[width=.9\textwidth]{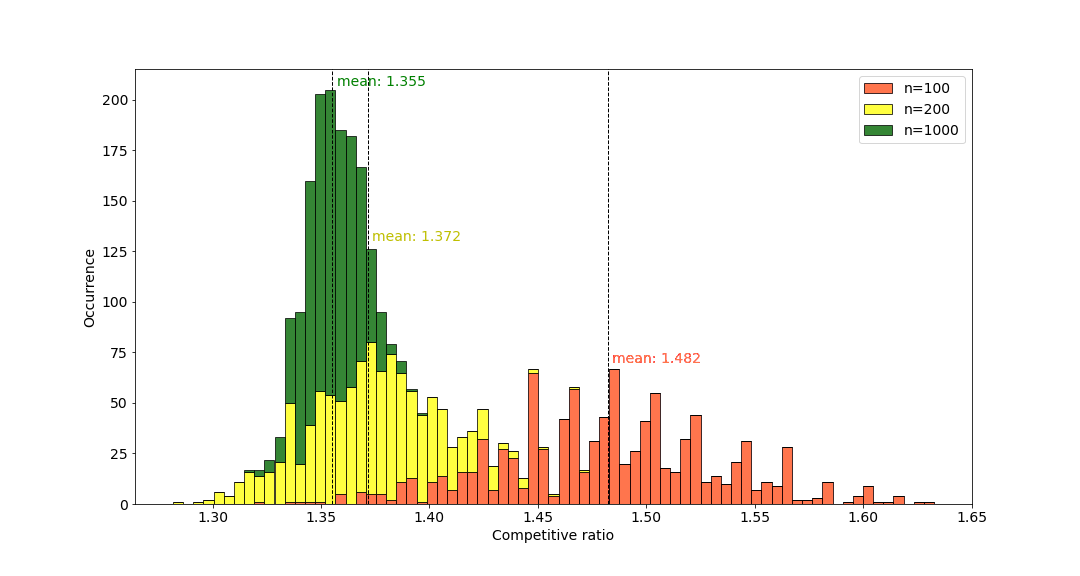}
\caption{Competitive ratios for $\beta = 0.01, n \in \{100,200,1000\}$.}\label{fig:comp_ratios_0.01}
\end{figure}

\begin{figure}[h!]
\centering
\includegraphics[width=.9\textwidth]{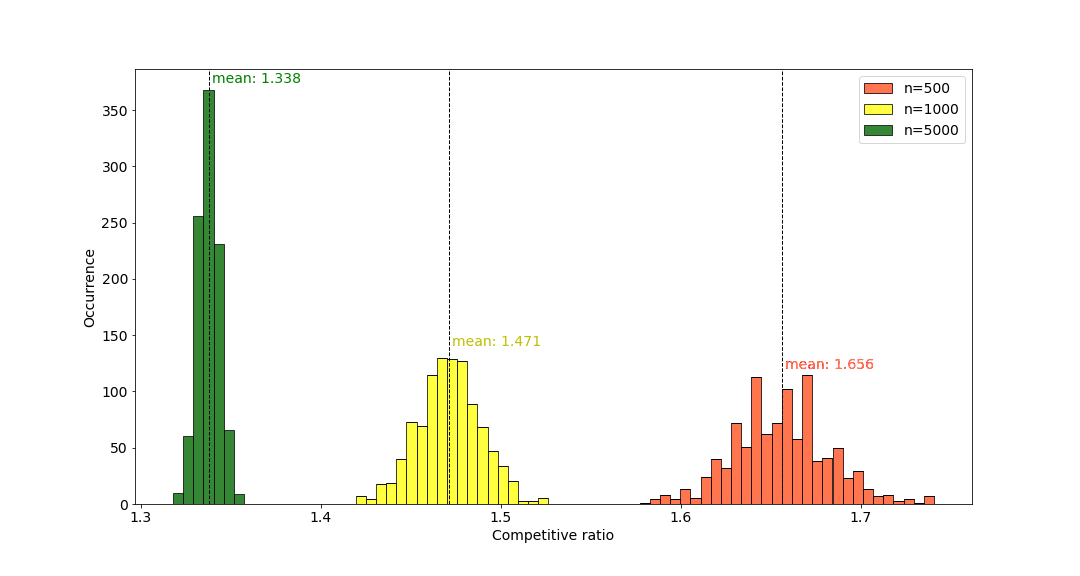}
\caption{Competitive ratios for $\beta = 0.001, n \in \{500,1000,5000\}$.}\label{fig:comp_ratios_0.001}
\end{figure}

\begin{figure}[h!]
\centering
\includegraphics[width=.9\textwidth]{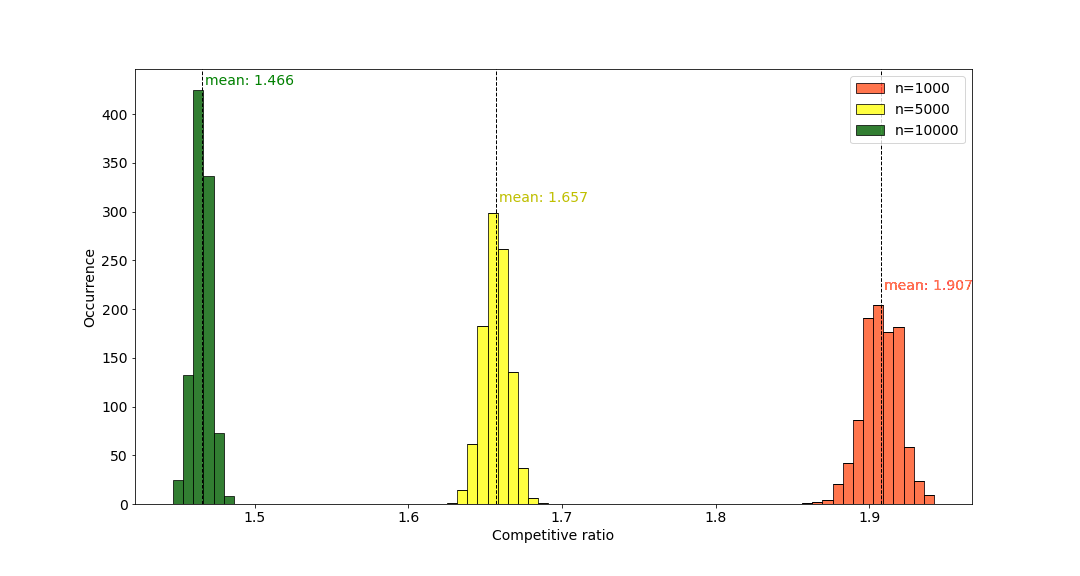}
\caption{Competitive ratios for $\beta = 0.0001, n \in \{1000,5000,10000\}$.}\label{fig:comp_ratios_0.0001}
\end{figure}


\section{Lower bounds for the best competitive ratio}
\label{sec:online_LB}

In this section, we provide several lower bounds for the best competitive ratio of an arbitrary online algorithm. 

\begin{theorem}\label{thm:neg_res_2jobs}
On general input, if there are only two jobs released, there is no online algorithm with competitive ratio better than $3/2$.
\end{theorem}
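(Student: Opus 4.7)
For every large integer $K$, I will exhibit a two-job adversarial instance on which every deterministic online algorithm pays cost at least $\frac{3K}{2K+1}\cdot OPT$. Since $\frac{3K}{2K+1}\to\frac{3}{2}$ as $K\to\infty$, this rules out any competitive ratio strictly below $3/2$.

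Fix a large integer $K$. The adversary first releases job~$1$ at time $0$ and withholds the ``last job'' notification, so the algorithm must expect further releases. Since the algorithm is deterministic, in the absence of any further arrival it has a uniquely determined time $\tau_1\ge 0$ at which it starts job~$1$ (if $\tau_1$ were unbounded, the adversary could release job~$2$ arbitrarily late and drive $ALG/OPT$ to infinity, so we may assume $\tau_1$ is finite). Knowing $\tau_1$, the adversary releases job~$2$ at time $\tau_1+1$ together with the terminal notification. Because the replenishment serving job~$1$ lies in $[0,\tau_1]$ and $\tau_1<r_2=\tau_1+1$, it cannot also serve job~$2$; a second replenishment is therefore unavoidable. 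The algorithm's cheapest response is to replenish and start job~$2$ at $\tau_1+1$, giving $ALG=2K+\tau_1+1$, with $F_{\max}=F_1=\tau_1+1$ dominating $F_2=1$.

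Offline, one can either (i) use a single replenishment at $\tau_1+1$, serving job~$1$ at $\tau_1+1$ and job~$2$ at $\tau_1+2$ for cost $K+\tau_1+2$, or (ii) replenish twice at $0$ and $\tau_1+1$ for cost $2K+1$; hence $OPT=\min\{K+\tau_1+2,\ 2K+1\}$. I minimize $ALG/OPT$ over $\tau_1\in\mathbb{Z}_{\ge 0}$. On the range $0\le\tau_1\le K-1$ we have $OPT=K+\tau_1+2$, so $ALG/OPT=\frac{2K+\tau_1+1}{K+\tau_1+2}$; its derivative in $\tau_1$ has sign $1-K<0$, so this function is strictly decreasing and attains its minimum at $\tau_1=K-1$ with value $\frac{3K}{2K+1}$. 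On the range $\tau_1\ge K$ we have $OPT=2K+1$, so $ALG/OPT=\frac{2K+\tau_1+1}{2K+1}$ is increasing in $\tau_1$ and at its smallest equals $\frac{3K+1}{2K+1}>\frac{3K}{2K+1}$. In either regime the ratio is at least $\frac{3K}{2K+1}$, and letting $K\to\infty$ gives the desired $3/2$ lower bound.

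The main delicate point is verifying that the adversary's ``release-just-after-commit'' trick genuinely forces a second replenishment: this rests on $S_1=\tau_1$ being irrevocable and strictly smaller than $r_2$, so any replenishment in $[0,\tau_1]$ lies before $r_2$ and cannot serve job~$2$. A minor check is that the algorithm cannot gain by starting job~$2$ later than $\tau_1+1$ (that only potentially increases $F_2$ and never reduces the replenishment count), so $ALG=2K+\tau_1+1$ is indeed the algorithm's best response in this case.
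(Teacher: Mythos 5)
Your proposal is correct and follows essentially the same route as the paper: the same adversary (release job~2 at time $\tau_1+1$, one tick after the algorithm's committed start of job~1), the same cost expressions $ALG=2K+\tau_1+1$ and $OPT=\min\{K+\tau_1+2,\,2K+1\}$, and the same case split at $\tau_1=K$, yielding the bound $3K/(2K+1)\to 3/2$. Your explicit minimization over $\tau_1$ and the finiteness remark about $\tau_1$ are slightly more careful than the paper's write-up, but the argument is the same.
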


\begin{proof}
Consider an arbitrary online algorithm. Suppose that the first job is released at 0, and the algorithm replenishes and starts this job at $t$. Then, assume that the last job is released at $t+1$, therefore the algorithm schedules that job immediately, and then stops. Hence, $ALG = 2K+t+1$.

On the other hand, $OPT = \min \{2K+1,K+t+2\}$, because it either replenishes the resource once at $t+1$ or twice at $0$ and at $t+1$. 
There are two cases to consider:

\begin{enumerate}
\item If $K \leq t$, then $OPT = 2K+1$, and $ALG = 2K+t+1 \geq 3K+1$. Hence, \[ \frac{ALG}{OPT} \geq \frac{3K+1}{2K+1} \rightarrow \frac{3}{2},\text{ if } K \rightarrow \infty.\]
\item If $K > t$, then $OPT = K+t+2$, therefore,
\[\frac{ALG}{OPT} = \frac{2K+t+1}{K+t+2} \rightarrow \frac{3}{2}, \text{ if } K \rightarrow \infty.\]
\end{enumerate}

Therefore, no online algorithm can obtain a competitive ratio better than $3/2$, even if there are only two jobs released.
\end{proof}

\begin{theorem}\label{thm:neg_res_3jobs}
On general input, if there are at least three jobs released, there is no online algorithm with competitive ratio better than $4/3$.
\end{theorem}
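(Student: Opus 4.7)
My plan is to adapt the adaptive-adversary argument of Theorem~\ref{thm:neg_res_2jobs} to three jobs. The adversary releases job~1 at time $0$ and waits; once the algorithm commits to $S_1 = t$, the adversary releases job~2 at $r_2 = t+1$; once the algorithm commits to $S_2 = t + X$ (with $X \geq 1$), the adversary releases the last job at $r_3 = S_2 + 1 = t + X + 1$. Under this construction the intervals $[r_1, S_1] = [0,t]$, $[r_2, S_2] = [t+1, t+X]$, and $[r_3, S_3] = [t+X+1, S_3]$ are pairwise disjoint, so no single replenishment can cover two jobs. Hence the algorithm needs at least three replenishments, and its best response within this constraint is $S_3 = r_3$ (so that $F_3 = 1$ is harmless), giving $ALG = 3K + \max(t+1, X)$.

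Next I would compute $OPT$ for the release-date triple $(0, t+1, t+X+1)$. By Observations~\ref{obs:repl_times_increasing} and~\ref{obs:repl_times_rel_dates} it suffices to enumerate four replenishment patterns: three separate replenishments (cost $3K+1$); a single replenishment at $t+X+1$ (cost $K + t + X + 2$, bottlenecked by $F_1$); two replenishments separating $\{1\}$ from $\{2,3\}$ (cost $2K + X + 1$); and two replenishments separating $\{1,2\}$ from $\{3\}$ (cost $2K + t + 2$). This yields
\[
OPT(t,X) \;=\; \min\bigl(3K+1,\; K+t+X+2,\; 2K+X+1,\; 2K+t+2\bigr).
\]

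The remaining task is to bound $(3K + \max(t+1,X))/OPT(t,X)$ from below over all algorithm choices $(t, X)$. A short case analysis pins the algorithm's best response to the line $X = t+1$: when $X > t+1$, the numerator grows linearly in $X$ while $OPT$ is bounded above by $\min(2K+t+2,\,3K+1)$, so the ratio only worsens; when $X < t+1$, the numerator is unchanged but $OPT$ strictly decreases. Along $X = t+1$ the ratio becomes $(3K+t+1)/\min(K+2t+3,\,2K+t+2,\,3K+1)$, which by monotonicity is minimised near $t = K-1$, at which point all four $OPT$-pieces coincide at $3K+1$ while $ALG = 4K$. Therefore $ALG/OPT \geq 4K/(3K+1) \to 4/3$ as $K \to \infty$.

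The step I expect to be most delicate is ruling out every exotic combination of $(t,X)$ via the case analysis. Specifically one must verify the monotonicity of $(3K+t+1)/(K+2t+3)$ in $t$ (its derivative has sign $-5K+1 < 0$ for $K \geq 1$), and identify the threshold $t \approx K - 1$ at which the four $OPT$-pieces swap roles as the binding constraint. Once this bookkeeping is complete, the bound $4K/(3K+1) \to 4/3$ is immediate and the theorem follows.
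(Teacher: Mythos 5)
Your proposal is correct and follows essentially the same route as the paper's proof: the same adaptive adversary (each new job released one time unit after the previous one is started, forcing three pairwise disjoint replenishment intervals), the same enumeration of offline replenishment patterns, and the same worst case around $t = K-1$, $X = K$ yielding $4K/(3K+1) \rightarrow 4/3$. One small quibble: for $t < K-1$ the true ratio actually \emph{decreases} as $X$ grows past $t+1$ (the single-replenishment option $K+t+X+2$ remains the binding piece of $OPT$ up to $X = K$), so the algorithm's best response is not literally pinned to the line $X = t+1$; however, your own bound $ALG/OPT \geq (3K+X)/\min(2K+t+2,\,3K+1) \geq 4K/(3K+1)$ already covers that region, so the conclusion is unaffected.
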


\begin{proof}
Consider an arbitrary online algorithm. 
Suppose that the first job is released at $0$ and the algorithm replenishes and starts this job at $t_1$. 
Then, a second job is released at $t_1+1$, and the algorithm replenishes and starts this job at some $t_2\geq t_1$. 
Finally, the third and last job is released at $t_2+1$ which is replenished and started immediately. 

We can assume that the flow time of the second job is at least the flow time of the first job, i.e. $t_1+1 \leq t_2-t_1$, since replenishing and starting the second job sooner would not decrease the maximum flow time of the algorithm. Hence, $ALG = 3K+t_2-t_1$. 

On the other hand, $OPT = \min\{K+t_2+2, 2K+t_1+1, 3K+1\}$, depending on the number of replenishments (one, two or three). We are going to distinguish three cases:

\begin{enumerate}
\item If $t_1 \leq K-1$ and $t_2-t_1 \leq K-1$, then, $ALG \geq 2K+t_2+1$, and $K+t_2+2 \leq 2K+t_1+1 \leq 3K$, from which $OPT = K+t_2+2$ follows. Therefore,

\[ \frac{ALG}{OPT} \geq \frac{2K+t_2+1}{K+t_2+2} = 1 + \frac{K-1}{K+t_2+2} \geq 1+\frac{K-1}{3K} \rightarrow \frac{4}{3},\text{ if } K \rightarrow \infty.\]
\item If $t_1 \leq K-1$ and $t_2-t_1 \geq K$, then $ALG \geq 4K$, and $OPT = 2K+t_1+1$. Therefore:

\[ \frac{ALG}{OPT} \geq \frac{4K}{2K+t_1+1} \geq \frac{4K}{3K} = \frac{4}{3}.\]
\item If $t_1 \geq K$ and $t_2-t_1 \geq K$, then $ALG \geq 4K$ and $OPT = 3K+1$. Therefore,

\[ \frac{ALG}{OPT} \geq \frac{4K}{3K+1} \rightarrow \frac{4}{3},\text{ if } K \rightarrow \infty.\]
\end{enumerate}
It follows that no online algorithm can obtain a competitive ratio better than $4/3$, if there are at least three jobs released.
\end{proof}

Now we consider the $p$-regular input consisting of $n$ jobs, denoted by $R_n$. That is, $r_j = (j-1)p$ for $1 \leq j \leq n$. In Section~\ref{sec:online_general} we have presented a 2-competitive online algorithm whose competitive ratio tends to $\sqrt{2}$ as the number of jobs tends to infinity. 

In this section we investigate the question whether the above limit of $\sqrt{2}$ could be decreased to $1+\varepsilon$ for an arbitrary small $\varepsilon > 0$. 
So, we will consider only long sequences of jobs, i.e., where the number of jobs is larger than some number $n_0$, which is independent of the input.

\begin{lemma}\label{lem:q_c-comp}
On $p$-regular input, there exists $n_0 > 0$ such that for any $n \geq n_0$, the number of the replenishments in any $c$-approximate solution for $R_n$ is in 
\[
\left[\frac{1}{2c+\varepsilon_n}\sqrt{\frac{np}{K}},c\left(2\sqrt{\frac{np}{K}}+2\right)\right],
\]
where $\varepsilon_n\rightarrow 0$ as $n\rightarrow+\infty$.
\end{lemma}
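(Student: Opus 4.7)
The plan is to derive both bounds from two ingredients already in the paper: the lower bound on the cost of any feasible solution given in Observation~\ref{obs:flowtime_from_replenishment}(iii), and the upper bound on $OPT(R_n)$ given in Lemma~\ref{lem:opt_lb_diff}. Fix a $c$-approximate solution with $q$ replenishments, and let $A$ denote its cost. By assumption and by Lemma~\ref{lem:opt_lb_diff},
\[
A \;\leq\; c\cdot OPT(R_n) \;\leq\; c\bigl(2\sqrt{npK}+K+1\bigr).
\]

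For the upper bound on $q$, I would split the solution's cost into its two nonnegative components $qK$ and $F_{\max}$ and simply throw away $F_{\max}$. This gives $qK \leq A$, hence
\[
q \;\leq\; c\Bigl(2\sqrt{np/K}+1+\tfrac{1}{K}\Bigr) \;\leq\; c\bigl(2\sqrt{np/K}+2\bigr),
\]
where the last step uses $K\geq 1$ (recall all data is integral).

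For the lower bound on $q$, I would instead throw away the $qK$ part of the cost and use Observation~\ref{obs:flowtime_from_replenishment}(iii) to bound $F_{\max}$ from below: $(\lceil n/q\rceil -1)p+1 \leq A$. Dropping the ceiling (since $\lceil n/q\rceil\geq n/q$) and rearranging, I get
\[
q \;\geq\; \frac{np}{A - 1 + p} \;\geq\; \frac{np}{c(2\sqrt{npK}+K+1)+p}.
\]
The key algebraic manipulation is to factor $\sqrt{npK}$ out of the denominator so as to expose $\sqrt{np/K}$ in the numerator:
\[
\frac{np}{c(2\sqrt{npK}+K+1)+p}
\;=\; \frac{\sqrt{np/K}}{2c+\dfrac{cK+c+p}{\sqrt{npK}}}.
\]
Setting $\varepsilon_n := (cK+c+p)/\sqrt{npK}$, one immediately has $\varepsilon_n\to 0$ as $n\to\infty$, and the lower bound takes the claimed form $\sqrt{np/K}/(2c+\varepsilon_n)$.

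Finally, the role of $n_0$ is just to guarantee that $\varepsilon_n$ is well-defined and positive (which is automatic for all $n\geq 1$ since $c,p,K>0$); more seriously, one needs $n_0$ large enough that the interval is nonempty and the stated lower bound is a genuine positive number, but this follows from $\varepsilon_n\to 0$ and the fact that the upper bound grows like $\sqrt{n}$ while the lower bound also grows like $\sqrt{n}$ with a smaller coefficient. I do not expect any real obstacle here; the only mildly delicate point is the careful handling of the ceiling $\lceil n/q\rceil$ and of the constant terms $K$, $p$, $1$ in the denominator, to confirm that they can all be absorbed into a single vanishing quantity $\varepsilon_n$.
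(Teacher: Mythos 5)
Your proof is correct and follows essentially the same route as the paper's: both bounds come from comparing the cost decomposition $qK+F_{\max}$ against the upper bound $c\left(2\sqrt{npK}+K+1\right)$ from Lemma~\ref{lem:opt_lb_diff}, discarding $F_{\max}$ for the upper bound on $q$ and using Observation~\ref{obs:flowtime_from_replenishment} for the lower bound. The only difference is cosmetic: the paper phrases the lower-bound step as a proof by contradiction with $\varepsilon_n$ chosen implicitly, whereas you compute directly and exhibit $\varepsilon_n=(cK+c+p)/\sqrt{npK}$ explicitly.
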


\begin{proof}
From Lemma~\ref{lem:opt_lb_diff}, we have $OPT(R_n)\leq 2\sqrt{npK}+K+1$.
Hence, the objective function value in any $c$-approximate solution is at most $c(2\sqrt{npK}+K+1)$.
Since $K\geq 1$, the upper bound on the number of the replenishments immediately follows.

On the other hand, we will prove that if the number of the replenishments is too small, then the flow time of the solution is larger than the upper bound for a $c$-approximate solution. 
Suppose for contradiction that we have $q<1/(2c+\varepsilon_n)\cdot \sqrt{np/K}$, where $\varepsilon_n \rightarrow 0$ as $n$ tends to $+\infty$.
After a small transformation, we get
\begin{align*}
\frac{np}{(2c+\varepsilon_n)\sqrt{npK}}> q,
\end{align*}
and then,
\begin{align*}
\frac{np}{c(2\sqrt{npK}+K+1)+p-1}> q,
\end{align*}
if $n \geq n_0$ for some $n_0 > 0$.
We can reduce the denominator on the left-hand-side by using $OPT(R_n)\leq 2\sqrt{npK}+K+1$ again to get
\[
\frac{np}{c\cdot OPT(R_n)+p-1}> q.
\]
Rearranging terms gives
\[
np/q-p+1> c\cdot OPT(R_n).
\]
Notice that the left hand side is smaller than $qK + \lceil n/q\rceil\cdot p-p+1$, which is the cost of a schedule with $q$ replenishments by Observation~ \ref{obs:flowtime_from_replenishment}. Therefore, $q$ replenishments are not enough to obtain a $c$-approximate solution.
\end{proof}

\begin{lemma}\label{lem:F_c-comp}
On $p$-regular input, there exist a series $\varepsilon_n$ such that $\varepsilon_n\rightarrow 0$ as $n\rightarrow\infty$, and some integer $n_1 > 0$ such that for any $n \geq n_1$, the maximum flow time in any $c$-approximate solution for $R_n$ is in 
\[
\left[\frac{K^{3/2}\sqrt{(n-1)p}}{(2+\varepsilon_n)c}-p+1,c\left(2\sqrt{\frac{np}{K}}+K+1\right)\right].
\]
\end{lemma}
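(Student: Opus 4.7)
I would derive the two endpoints of the interval separately, each in one short step from an earlier result in the excerpt.

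For the upper endpoint, the observation is that in any feasible solution the cost decomposes as $c_\Q + F_{\max}$ with $c_\Q \geq 0$, so $F_{\max}$ never exceeds the total cost. A $c$-approximate solution has total cost at most $c\cdot OPT(R_n)$, and Lemma~\ref{lem:opt_lb_diff} provides $OPT(R_n)\leq 2\sqrt{npK}+K+1$. Combining the two inequalities yields the claimed upper endpoint.

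For the lower endpoint, I would start from Observation~\ref{obs:flowtime_from_replenishment}(ii): any $c$-approximate solution with $q$ replenishments satisfies
\[
F_{\max} \;\geq\; (\lceil n/q\rceil - 1)p + 1 \;\geq\; \frac{np}{q} - p + 1.
\]
Lemma~\ref{lem:q_c-comp} supplies, for $n$ beyond some threshold, the upper estimate $q\leq c\bigl(2\sqrt{np/K}+2\bigr)$. Substituting this gives
\[
F_{\max} \;\geq\; \frac{np}{c\bigl(2\sqrt{np/K}+2\bigr)} - p + 1 \;=\; \frac{\sqrt{npK}}{2c\bigl(1+\sqrt{K/(np)}\bigr)} - p + 1,
\]
after which the claimed form is obtained by setting $\varepsilon_n := 2\sqrt{K/(np)}$ and, if necessary, enlarging it slightly so that the square root in the final expression may be rewritten over $(n-1)p$ in place of $np$.

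The main obstacle is purely bookkeeping: one has to pick a single vanishing sequence $\varepsilon_n$ that simultaneously absorbs the additive ``$+2$'' in the upper bound on $q$, the rounding error between $\lceil n/q\rceil$ and $n/q$, and the replacement of $n$ by $n-1$ underneath the square root. Each of these contributes only an $O(1/\sqrt{n})$ relative correction, so any dominating null sequence suffices, and one may then read off the smallest $n_1$ beyond which the estimate goes through. No genuinely new idea is needed beyond the two earlier results cited.
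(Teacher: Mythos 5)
Your overall route is the same as the paper's: the upper endpoint comes from $F_{\max}\leq cost \leq c\cdot OPT(R_n)$ together with Lemma~\ref{lem:opt_lb_diff}, and the lower endpoint comes from $F_{\max}\geq np/q-p+1$ combined with an upper bound on $q$. The only structural difference is that you chain Observation~\ref{obs:flowtime_from_replenishment}(ii) with the upper bound on $q$ from Lemma~\ref{lem:q_c-comp}, whereas the paper argues by contradiction: it assumes $F$ is below the claimed threshold, deduces that the replenishment cost alone, $npK/(F+p-1)$, exceeds the budget of a $c$-approximate solution, and concludes. These are equivalent in substance; your direct chaining is if anything cleaner, since Lemma~\ref{lem:q_c-comp}'s upper bound on $q$ is itself obtained from exactly the inequality $qK\leq c\cdot OPT(R_n)$ that the paper re-derives.

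There is, however, one discrepancy you cannot dismiss as bookkeeping. Your substitution gives
\[
F_{\max}\;\geq\;\frac{\sqrt{npK}}{(2+\varepsilon_n)c}-p+1\;=\;\frac{K^{1/2}\sqrt{np}}{(2+\varepsilon_n)c}-p+1,
\]
while the stated lower endpoint is $K^{3/2}\sqrt{(n-1)p}/\bigl((2+\varepsilon_n)c\bigr)-p+1$. The missing factor of $K$ is not an $O(1/\sqrt{n})$ correction and cannot be absorbed into $\varepsilon_n$; your closing paragraph, which lumps it together with the $+2$, the ceiling, and the $n$ versus $n-1$ issues, is therefore not a valid repair. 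The same mismatch occurs at the upper endpoint: you correctly derive $F_{\max}\leq c\bigl(2\sqrt{npK}+K+1\bigr)$, but the statement asserts $c\bigl(2\sqrt{np/K}+K+1\bigr)$, which is smaller when $K>1$. To be fair, the paper's own proof has the identical inconsistency: it bounds the cost of a $c$-approximate solution by $c\bigl(2\sqrt{np/K}+K+1\bigr)$ while citing Lemma~\ref{lem:opt_lb_diff}, which gives $c\bigl(2\sqrt{npK}+K+1\bigr)$, and this is precisely where the spurious $K^{3/2}$ originates. Your version of both endpoints is the one that actually follows from the cited lemmas; the two versions coincide when $K=1$, which is the only case used downstream in Theorem~\ref{thm:neg_res}. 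So: flag the exponent of $K$ explicitly rather than waving at it, and either restrict to $K=1$ or restate the endpoints with $\sqrt{npK}$.
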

\begin{proof}
The upper bound on the flow time follows immediately from the upper bound of Lemma~\ref{lem:opt_lb_diff} on $OPT(R_n)$.

We proceed with the lower bound. Let $F$ be the maximum flow time of a solution with $q$ replenishments. By Proposition~\ref{prop:opt_q_repl}, we have

\[Kq + F \geq Kq + \left (\left \lceil n/q \right \rceil -1 \right )p+1.\]
After small transformations we get

\[(F+p-1)/p \geq \left \lceil n/q \right \rceil \geq n/q, \]  from which it follows that the number of the replenishments $q$ is at least $np/(F+p-1)$, thus the replenishment cost is at least $npK/(F+p-1)$.

Suppose the statement of the lemma does not hold, i.e., $F<(K^{3/2}\sqrt{(n-1)p})/((2+\varepsilon_n)c)-p+1$, for every $\varepsilon_n \rightarrow 0$ as $n \rightarrow \infty$. We will prove that then
\begin{equation}
npK/(F+p-1) > c(2\sqrt{np/K}+K+1), \label{eq:indirect}
\end{equation}
where the left hand side is a lower bound on the replenishment cost (see above), and the right hand side is an upper bound on the cost of a $c$-approximate solution (cf.~Lemma~\ref{lem:opt_lb_diff}), which is a contradiction, and the claimed lower bound on the maximum flow time follows.
To this end, we rewrite our indirect assumption:
\begin{align*}
F + p -1 < \frac{(n-1)pK}{c(2+\varepsilon_n)\sqrt{(n-1)p/K}}.
\end{align*}
Observe that for $\varepsilon_n = (K+1) /\sqrt{np/K}$, we have $\varepsilon_n \rightarrow 0$ as $n\rightarrow \infty$, and 
\begin{align*}
\frac{(n-1)pK}{c(2+\varepsilon_n)\sqrt{(n-1)p/K}}< \frac{npK}{c(2\sqrt{np/K}+K+1)},
\end{align*}
which implies (\ref{eq:indirect}).
\end{proof}

\begin{theorem}\label{thm:neg_res}
For any  $n_0> 0$, there is no deterministic online algorithm which is $1.015$ competitive on any $p$-regular input $R_n$ with $n > n_0$ even if $K=1$.
\end{theorem}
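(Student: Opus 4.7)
My plan is to argue by contradiction, exploiting the tight constraints that Lemmas~\ref{lem:q_c-comp} and~\ref{lem:F_c-comp} impose on any $c$-competitive solution for $c$ close to $1$. Fix $K=1$ and pick $p$ to be a large integer to be tuned later. Suppose that some deterministic online algorithm $A$ achieves competitive ratio $c < 1.015$ on every $p$-regular input $R_n$ with $n > n_0$. As the adversary feeds an infinite $p$-regular stream to $A$, let $q(n)$ and $\phi(n)$ denote, respectively, the number of replenishments that $A$ has irrevocably committed to and the maximum flow time of $A$'s already-committed schedule by the time the $n$-th job has been released. When the adversary announces that the sequence ends at some $n^{*}$, the algorithm is free to schedule any pending jobs optimally, but the previously committed replenishment times and starting times remain fixed.

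The adversary picks $n^{*} > n_0$ after observing $A$ so as to maximize the ratio $ALG(R_{n^{*}})/OPT(R_{n^{*}})$. If $A$ is $c$-competitive at $n^{*}$, Lemma~\ref{lem:q_c-comp} forces the total number of replenishments used by $A$ on $R_{n^{*}}$ to lie in $[\tfrac{1}{2c+\varepsilon_{n^{*}}}\sqrt{n^{*}p},\,c(2\sqrt{n^{*}p}+2)]$, and Lemma~\ref{lem:F_c-comp} forces $\phi(n^{*})$ to lie in a similarly narrow band around $\sqrt{n^{*}p}$. Hence both $q(n)/\sqrt{np}$ and $\phi(n)/\sqrt{np}$ would have to be pinned in a shrinking interval around $1$ uniformly over all stopping times $n>n_0$, a very restrictive condition that an online algorithm cannot meet at every $n$.

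Next I would analyse $A$'s behaviour between two consecutive critical moments. Let $\tau_i$ be $A$'s $i$-th replenishment and $B_i$ the batch it serves, with first job $f_i$ of release date $r_{f_i}$. The adversary then has two complementary stopping strategies: (a) stop just before $\tau_i$, after $A$ has delayed enough jobs to build up a large flow time in the current batch, forcing $A$ to pay this $\phi$ together with $i-1$ replenishments while $OPT$ of the prefix uses a well-chosen number of replenishments matching Lemma~\ref{lem:opt_balanced}; or (b) stop just after $\tau_i$, charging $A$ the fresh replenishment cost $K$ against an $OPT$ for a prefix that could have merged the last two batches. Using Observation~\ref{obs:flowtime_from_replenishment} to lower bound $\phi$ by the largest batch span $(n'-1)p+1$, together with the upper bound on $OPT$ from Lemma~\ref{lem:opt_lb_diff}, one translates both stopping strategies into explicit lower bounds on $A(R_{n^{*}})/OPT(R_{n^{*}})$.

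The main obstacle is the quantitative analysis needed to pin down the concrete constant $1.015$. One parametrises $A$'s trajectory by the sequence of gaps $\delta_i:=\tau_i-\tau_{i-1}$ (equivalently, the batch sizes $|B_i|\approx \delta_i/p$), expresses $q(n)$ and $\phi(n)$ as explicit functions of these gaps, and then for each trajectory computes $\max_i A(R_{n_i})/OPT(R_{n_i})$ over the adversary's stopping indices $i$. Minimising this maximum over all online trajectories yields the best competitive ratio achievable; the delicate step is to verify that this min-max exceeds $1.015$ for a suitable choice of $p$ and all sufficiently large $n^{*}$. I expect this verification to reduce to a small constrained optimisation problem whose optimum is only accessible numerically, and balancing the $(a)$-type and $(b)$-type stopping bounds to produce the precise cutoff $1.015$ will be the most technically involved part of the argument.
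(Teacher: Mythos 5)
Your plan assembles the right ingredients (determinism of the algorithm, Lemmas~\ref{lem:q_c-comp} and~\ref{lem:F_c-comp}, an adversary who chooses the stopping index), but it never closes the argument, and the step you defer to ``a small constrained optimisation problem whose optimum is only accessible numerically'' is precisely the part that needs a proof. Two concrete problems. First, your reading of the lemmas is too optimistic: they do \emph{not} pin $q(n)/\sqrt{np}$ and $\phi(n)/\sqrt{np}$ ``in a shrinking interval around $1$''; they only confine them to a band of width roughly a factor $4c^2$ (from $\tfrac{1}{2c}\sqrt{np/K}$ up to $2c\sqrt{np/K}$), so no contradiction falls out of the lemmas alone. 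Second, your stopping strategies (a) and (b) as stated do not produce a ratio bounded away from $1$: with $K=1$ a single extra replenishment is an $O(1)$ loss against an optimum of order $\sqrt{np}$, and a single batch's flow time is of the same order as $OPT$ itself, so neither strategy by itself yields a quantifiable multiplicative gap, let alone the specific constant $1.015$.

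The missing idea is the paper's prefix-redundancy argument. One fixes a prefix length $n_1$ and a much longer input $n_2=64(n_1-1)+1$; by determinism the algorithm's schedule on $R_{n_2}$ extends its schedule on $R_{n_1}$. By Lemma~\ref{lem:F_c-comp} the flow-time scale the algorithm must eventually tolerate on $R_{n_2}$ is at least twice the flow times it incurred on the prefix, so one can \emph{drop every second} of the $\Omega(\sqrt{n_1p})$ prefix replenishments (guaranteed by Lemma~\ref{lem:q_c-comp}) without increasing the maximum flow time of the long schedule. This exhibits a feasible solution cheaper than the algorithm's by an additive $\sqrt{n_1p}/(4c+2\varepsilon_{n_1})$, hence $cost(S(n_2),\Q(n_2))\geq OPT(R_{n_2})+\sqrt{n_1p}/(4c+2\varepsilon_{n_1})$, while $OPT(R_{n_2})\approx 16\sqrt{n_1p}$. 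Competitiveness then forces
\[
\left(16+\frac{1}{4c+2\varepsilon_{n_1}}\right)\sqrt{n_1p}\leq 16c\sqrt{n_1p},
\]
a quadratic inequality whose solution is $c\geq \bigl(1+\sqrt{17/16}\bigr)/2>1.015$ in the limit. This is an exact, closed-form derivation of the constant, not a numerical balancing of stopping strategies; without some equivalent of this prefix-halving construction your outline does not yield the theorem.
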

\begin{proof}
Fix any $n_0 > 0$.
Suppose  there is a $c$-competitive deterministic online  algorithm on $p$-regular input with $n \geq n_0$ jobs. 
For an arbitrary $p$-regular input $R_n$, let $(S(n), \Q(n))$ be the solution computed by the algorithm. Note that for any $n$, $R_n$ is unique, and thus $(S(n), \Q(n))$ is also uniquely defined, since the algorithm is deterministic.
 
Let $n_1>n_0$ be such that the algorithm replenishes the $2k^{th}$ time when the $n_1^{th}$ job is released at $(n_1-1)p$ for some integer $k>0$, independently whether the $n_1^{th}$ job is the last job released or not. Since the algorithm is deterministic on a $p$-regular input, $n_1$ is well-defined, and for any input where $n \geq n_1$, it produces the same schedule until $(n_1-1)p$. That is, $S(n_1)$ is a sub-schedule of $S(n)$, and $\Q(n_1) \subseteq \Q(n)$ for any $n \geq n_1$.

From Lemma~\ref{lem:F_c-comp}, we know that the maximum flow time  in  $(S(n_1),\Q(n_1))$ is at most  $U(n_1) = c\left(2\sqrt{n_1 p }+2\right)$, and the maximum flow time in $(S(n),\Q(n))$ is at least $L(n) = \sqrt{(n-1)p}/((2+\varepsilon_{n})c)-p+1$.
We can choose $n$ such that $L(n) \geq 2 U(n_1)$.
 
Now consider the following new feasible solution $(S'(n),\Q'(n))$ for $R_n$: starting with the first one, drop every second replenishment from $\Q(n)$ in $[0,(n_1-1)p]$. The flow time of the jobs arriving before $(n_1-1)p$ at most doubles (since $(n_1-1)p$ is the time of the $2k^{th}$ replenishment, it is not removed), and the flow time of the jobs released after $n_1$ does not change. Since $L(n) \geq 2 U(n_1)$, the maximum flow time of $(S'(n),\Q'(n))$ is not greater than of $(S(n),\Q(n))$.

The cost of the obtained solution is $cost(S'(n),\Q'(n)) \geq OPT(R_n)$.
However, by Lemma~\ref{lem:q_c-comp}, there are at least $\sqrt{n_1 p}/(2c+\varepsilon_{n_1})$ replenishments until $n_1$ in $Q(n)$. Therefore $cost(S'(n), \Q'(n)) \leq cost(S(n), \Q(n)) - \sqrt{n_1 p}/(4c+2\varepsilon_{n_1}) $. Thus,  $cost(S(n),\Q(n)) \geq OPT(R_n) + \sqrt{n_1 p}/(4c+2\varepsilon_{n_1})$.

Let $n_2 := 64(n_1-1)+1$. Then we have $OPT(R_{n_2})\leq 2\sqrt{n_2 p}+2$  from Lemma~\ref{lem:opt_lb_diff}, thus $OPT(R_{n_2})\leq 16\sqrt{n_1 p}$. Let $(S(n_2),\Q(n_2))$ be the schedule and replenishment structure provided by a $c$-competitive algorithm, hence,
\[cost(S(n_2),\Q(n_2)) \leq 16c\sqrt{n_1 p}.\]
On the other hand, by Lemma~\ref{lem:opt_lb},
\[
OPT(R_{n_2}) \geq 2\sqrt{n_2 p} - p + 1 = 16\sqrt{(n_1-1)p+p/64} - p + 1.
\]
 
Therefore, $(S(n_2),\Q(n_2))$ can be a $c$-approximate solution only if
\begin{align*}
\left(16+1/(4c+2\varepsilon_{n_1})\right)\sqrt{n_1 p}\leq 16c\sqrt{n_1 p}.
\end{align*}
This inequality leads to a quadratic expression in $c$, and  its solution yields that $(S(n_2),\Q(n_2))$ can be a $c$-approximate solution only if $c\geq (1+\sqrt{17/16})/2-\mu>1.015-\mu$, where $\mu\rightarrow 0$ as $n_1\rightarrow\infty$.
\end{proof}

\section{Conclusions}\label{sec:concl}

In this paper we provided a deterministic online 2-competitive algorithm for the online variant of the problem $1|jrp,s=1, p_j=1, distinct \ r_j | \Fmax$. The competitive ratio is even better for the case of $p$-regular input. Yet, there is a gap between the best upper and lower bound. The natural question arises whether it is possible to provide an online algorithm with better competitive ratio, or to derive a stronger lower bound for the best competitive ratio. There are other open questions to consider: what can we say when the jobs can have arbitrarily big processing time, or if there are multiple types of resources. These problems can be intriguing for  further research.

\section*{Acknowledgements}
This work has been supported by the National Research, Development and Innovation Office grants no. TKP2021-NKTA-01, and SNN129178. The research of P\'eter Gy\"orgyi was supported by the J\'anos Bolyai Research Scholarship of the Hungarian Academy of Sciences.


\bibliography{joint_rep}

\end{document}